\newtheorem{lemma}{Lemma}
\newtheorem{theorem}{Theorem}
\newtheorem{proposition}{Proposition}
\newtheorem{corollary}{Corollary}
\newtheorem{remark}{Remark}
\newcommand{\bee}{\begin{eqnarray}}
\newcommand{\eee}{\end{eqnarray}}
\newcommand{\be}{\begin{eqnarray*}}
\newcommand{\ee}{\end{eqnarray*}}
\newcommand{\R}{{\mathbb R}}
\newcommand{\En}{{\mathcal E}}
\newcommand{\I}{{\mathcal I}}
\newcommand{\V}{{\mathcal V}}
\newcommand{\Z}{{\mathcal Z}}
\newcommand{\T}{{\mathcal T}}
\newcommand{\No}{{\mathcal N}}
\newcommand{\z}{{\zeta}}
\begin{document}
 
 \title [Zakharov-Glassey enhancement]{Enhancement of the Zakharov-Glassey's method for Blow-Up in nonlinear Schr\"odinger equations}
 
 \author {
 Andrea Sacchetti
 }

\address {
Department of Physics, Informatics and Mathematics, University of Modena and Reggio Emilia, Modena, Italy.
}

\email {andrea.sacchetti@unimore.it}

\date {\today}

\thanks {The author thanks R.Carles and C.Sparber for helpful comments. \ This work is partially supported by the GNFM-INdAM and the UniMoRe-FIM 
project  ``Modelli e metodi della Fisica Matematica''.}

\begin {abstract} In this paper we give a sharper condition for blow-up of the solution to a nonlinear Schr\"odinger equation with free/Stark/quadratic 
potential by improving the well known Zakharov-Glassey's method. 

PACS number(s): {05.45.-a, 03.65.-w, 03.65.Db, 03.75.Lm}

MSC 2020 number(s): {35Qxx, 81Qxx}
\end{abstract}

\keywords {Nonlinear Schr\"odinger equation; blow-up solutions; Zakharov-Glassey's method; Ehrenfest's Theorem.}

\maketitle

\section {Introduction}

We consider, in dimension one, the nonlinear Schr\"odinger equation
\bee
\left \{
\begin {array}{l}
i \hbar \frac {\partial \psi_t}{\partial t} = H\psi_t + \nu |\psi_t |^{2\mu } \psi_t   \\
\left. \psi_t (x)\right |_{t=t_0} =\psi_0 (x)\, , \ \| \psi_0 \|_{L^2} = 1 \, , 
\end {array}
\right.
\, , \ \psi_t \in L^2 (\R , dx)\, , \label {Eq1}
\eee
where $H = -  \frac {\hbar^2}{2m} \frac {\partial^2 }{\partial x^2} + V (x)$ is the linear Schr\"odigner operator with potential $V(x)$; $\nu \in \R$ 
represents the strength of the nonlinear perturbation and $\mu >0$ is the nonlinearity power. \ Hereafter, for sake of simplicity, we fix the units such 
that $\hbar=1$ and $m=1$, we further assume that $t_0=0$. \ The restriction to dimension one is just to simplify the discussion, but extension to higher 
dimensions of the ideas in this paper could be possible; however, we do not dwell on this problem here. 

The first fundamental question that arises when dealing with a nonlinear Schr\"o\-din\-ger equation (\ref {Eq1}) is the existence of a solution locally in time 
in some functional space. \ Thus, for $\psi_0$ in such a space and under some assumptions on the potential $V(x)$, there exists $0 < t_+^\star \le +\infty$ such 
that $\psi_t \in C([0, t_+^\star ) )$; furthermore, conservation of the norm
\bee
\No (\psi_t ) =\No (\psi_0) \, \ \mbox { where } \ \No (\psi ):= \| \psi \|_{L^2} \, , \label {Eq2}
\eee
and of the energy
\bee
\En (\psi_t ) =\En (\psi_0) \, \ \mbox { where } \ \En (\psi ):= \left \langle \psi , H \psi \right \rangle_{L^2} + 
\frac {\nu}{\mu +1} \| \psi \|_{L^{2\mu +2}}^{2\mu +2}  \, , \label {Eq3}
\eee
are satisfied. \ Concerning global existence in the future three possibilities may occur:

\begin {itemize}

\item [-] $t_+^\star = +\infty$ and $\limsup_{t\to + \infty} \| \psi_t \|_{H^1} < +\infty$, that is the solution is global and bounded;

\item [-] $ t_+^\star =+\infty$ and $\limsup_{t\to + \infty} \| \psi_t \|_{H^1} = +\infty$, that is the solution blows up in infinite
time;

\item [-] $0 < t_+^\star < +\infty$ and $\| \psi_t\|_{H^1}\to + \infty$ as $t \to t_+^\star-0$, that is the solution blows up in finite time.

\end {itemize}

A similar analysis can be considered in the past for $t\le 0$.

Our pourpose in this paper is to give a blow-up criterion by improving the Zakharov(-Shabat)-Glassey's method. \ The method introduced by Zakarhov and 
Shabat \cite {ZS} and by Glassey \cite {G} (see also the papers by \cite {Ka,M,R}) is quite simple in the case where the virial identity takes a simple form. 

Let
\be
\I (t) = \langle \psi_t , x^2 \psi_t \rangle_{L^2}
\ee
be the \emph {moment of inertia}. \ It should be noticed that some textbooks denote (improperly) $\I$ by the name of \emph {variance}; in fact, we will introduce 
the \emph {variance} later. \ Eventually, in analogy with the usual definition in Classical Mechanics the term $\I$ should be properly called \emph {(polar) 
moment of inertia}.

If it can be shown that $\I (T_+^\I)=0$ (resp. $\I (T_-^\I)=0$) for some $\pm T_{\pm}^\I >0$ then blow-up occurs in the future at some $t^\star_+ \in (0,T_+^\I]$ 
(resp. in the past at some $t^\star_- \in [T_-^\I,0)$). \ This fact is a consequence of the functional inequality (\ref {Eq28}) and of the conservation of the 
norm (\ref {Eq2}). \ In order to prove that $\I (t)$ can take zero value at some instant $t$ one usually makes use of the virial identity, which in the one-dimensional 
free model where $V\equiv 0$ takes the form 
\bee
\frac {d^2 \I}{d t^2} = C_{\I} +  {2\nu} \frac {\mu -2}{\mu +1} \| \psi_t \|_{L^{2\mu +2}}^{2\mu +2} \, , \ C_{\I} =  {4 {\En}(\psi_0)} \, . \label {Eq4}
\eee
If, for example, $\mu=2$ and $\psi_0 $ is such that $\En (\psi_0 ) < 0$, then by the virial identity (\ref {Eq4}) and by the conservation of the energy, the 
positive quantity $\I (t) $ is an inverted parabola that must then become negative in finite times $T_\pm^\I$, $-\infty < T^I_- < 0 < T_+^\I < +\infty$, and thus 
the solution cannot exist for all time and blows up at finite time in the future as well as in the past \cite {OT}. 

This argument is very powerful because of its simplicity, in fact it is based on a pure Hamiltonian information $\En (\psi_0) < 0$, and it also applies to the super-critical case $\mu > 2$. \ On the other hand, it strongly depends on the virial identity (\ref {Eq4}) and thus it cannot simply be applied when an external potential $V(x)$ is present. \ However, in a sequence of seminal papers by Carles \cite {Car1,Car2,Car4,Car3} this method has been applied to the case where $V(x)$ is a quadratic or Stark potential in any dimension.

Our proposal of enhancement of the Zakarov-Glassey's method is based on a quite simple idea. \ Let
\bee 
\langle \hat x \rangle^t := \left \langle \psi_t , x \psi_t \right \rangle_{L^2} \label {Eq5}
\eee
be the expectation value of the position observable $x$, where $\hat x$ is the  associated operator. \ Let 
\be
\V (t) = \left \langle \psi_t , \left ( \hat x - \langle \hat x \rangle^t \right )^2 \psi_t \right \rangle = I(t) - \left [ \langle \hat x \rangle^t \right ]^2 
\ee
be the \emph {variance}. \ If it can be shown that $\V (T_+^\V)=0$ (resp. $\V (T_-^\V)=0$) at some $\pm T_{\pm}^\V >0$ then blow-up occurs in the future for some 
$t^\star_+ \in (0,T_+^\V]$ (resp. in the past for some $t^\star_- \in [T_-^\V,0)$), by (\ref {Eq29}. \ Since $\V (t) \le \I(t)$ then we expect to give a sharper 
condition for the occurrence of the blow-up; the price to pay is to give an expression of the expectation value $\langle x \rangle^t$, but this problem can 
be easily overcome using the (generalized) Ehrenfest's Theorem where $\langle x \rangle^t$ is nothing but the solution of the ``classical mechanics 
equation''. \ Finally, we must also emphasize the fact that the enhanced Zakharov-Glassey's method not only gives sharper conditions for the occurrence of 
blow-up but also allows us to give a better estimate of the instants $t_\pm^\star$ at which the solution becomes singular because $|T_\pm^\V |\le |T_\pm^\I |$.

The paper is organized as follows. \ In Section \ref {Sez2} we recall the Ehrenfest's generalized Theorem; in Section \ref {Sez3} we review the standard blow-up 
conditions in the free model where $V(x) \equiv 0$ and we show that these conditions can be easily improved by applying the virial equation for the variance $\V (t)$; in 
Section \ref {Sez4} we consider the case where $V(x) = \alpha x$, $\alpha \in \R$, is a Stark potential; in Section \ref {Sez5} we review the blow-up 
conditions in the case where $V(x) = \alpha x^2$, $\alpha \in \R$, is a quadratic potential and we show that again these conditions can be easily improved by applying 
the virial equation for the variance $\V (t)$. \ Finally, Appendix \ref {AppA} is about some functional inequalities, Appendix \ref {AppB} is about a comparison result 
for ordinary differential equations and Appendix \ref {AppC} is about the formal derivation of the virial identity; some results in Appendices \ref {AppB} 
and \ref {AppC} are due to the papers \cite {Car1,Car2,Car4,Car3}, I collect these results in two short Appendices for reader's benefit.

Hereafter, for the sake of simplicity, we omit the dependence on the variable $t$ when this fact does not cause misunderstandings, e.g. $\psi$ instead of 
$\psi_t$, $\langle \hat x\rangle$ instead of $\langle \hat x \rangle^t$, $\langle \hat p\rangle$ instead of $\langle \hat p \rangle^t$, $\I$ instead of 
$\I (t)$, $\V$ instead of $\V (t)$, and so on. 

By $f' = \frac {df}{dx}$ we denote the derivative with respect to $x$, by $\langle f,g \rangle_{L^2 }$ we denote the scalar product $\int_{\R} \bar f(x) g(x) dx$, 
and it is sometimes denoted simply by $\langle f,g \rangle$; also $\| f \|$ sometimes simply denotes $\| f \|_{L^2 }$.

\section {Ehrenfest's generalized Theorem for NLS} \label {Sez2}

The extension of the Ehrenfest's Theorem to the nonlinear Schr\"odinger equation (\ref {Eq1}) has already been considered by \cite {B,K}. \ In fact, by means of a 
straightforward calculation it follows that

\begin {proposition} \label {Prop1}
Let $a =a (x,p)$, $x,p \in \R$, be a classical observable function with associated operator $A$, let 
\bee
\langle A \rangle = \langle \psi_t, 
A \psi_t \rangle_{L^2 } \label {Eq6}
\eee
be its expectation value. \ Then
\bee
\frac {d\langle  A \rangle}{dt} = i  
\left \langle \psi_t , \left [ H , A \right ] \psi_t \right \rangle_{L^2 } 
+ i \nu 
\left \langle \psi_t , 
\left [ |\psi_t |^{2\mu } , A \right ] \psi_t 
\right \rangle_{L^2 } 
 \, , \label {Eq7}
\eee
 where $[H,A]=HA-AH$ is the commutator operator between the operators $H$ and $A$, and where $\left [ |\psi |^{2\mu } , A \right ] \psi  = 
 |\psi |^{2\mu } A (\psi ) - A( |\psi |^{2\mu } \psi )$. \ Equation (\ref {Eq7}) is usually called ``Ehrenfest's generalized Theorem''.
\end {proposition}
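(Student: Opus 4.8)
The plan is to prove the Ehrenfest identity~(\ref{Eq7}) by directly differentiating the expectation value~(\ref{Eq6}) under the integral sign and substituting the evolution equation~(\ref{Eq1}). First I would write
\be
\frac{d\langle A\rangle}{dt} = \left\langle \frac{\partial \psi_t}{\partial t}, A\psi_t\right\rangle + \left\langle \psi_t, A\frac{\partial \psi_t}{\partial t}\right\rangle ,
\ee
using that $A$ is time-independent as an operator (it is the quantization of the fixed classical symbol $a(x,p)$) and assuming enough regularity/decay on $\psi_t$ to justify interchanging $\frac{d}{dt}$ with the spatial integration — this is the formal level at which the statement is made, and the rigorous justification is exactly the kind of point deferred to Appendix~\ref{AppC}. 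Then I would replace $\frac{\partial\psi_t}{\partial t}$ by $-i\bigl(H\psi_t + \nu|\psi_t|^{2\mu}\psi_t\bigr)$ (recall $\hbar=1$), so that
\be
\frac{d\langle A\rangle}{dt} = \left\langle -i(H\psi_t+\nu|\psi_t|^{2\mu}\psi_t),\, A\psi_t\right\rangle + \left\langle \psi_t,\, A\bigl(-i\bigr)(H\psi_t+\nu|\psi_t|^{2\mu}\psi_t)\right\rangle .
\ee

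The second step is to collect terms. Pulling the scalars out of the Hermitian inner product (the $-i$ in the first slot becomes $+i$ under conjugation), the $H$-contributions combine to $i\langle H\psi_t, A\psi_t\rangle - i\langle\psi_t, AH\psi_t\rangle$, and since $H$ is self-adjoint on its domain $\langle H\psi_t, A\psi_t\rangle = \langle\psi_t, HA\psi_t\rangle$, giving $i\langle\psi_t,(HA-AH)\psi_t\rangle = i\langle\psi_t,[H,A]\psi_t\rangle$. Likewise the nonlinear contributions give $i\langle \nu|\psi_t|^{2\mu}\psi_t, A\psi_t\rangle - i\langle\psi_t, \nu A(|\psi_t|^{2\mu}\psi_t)\rangle$; here I use that $|\psi_t|^{2\mu}$ is a real multiplication operator, hence self-adjoint, so $\langle |\psi_t|^{2\mu}\psi_t, A\psi_t\rangle = \langle \psi_t, |\psi_t|^{2\mu}A\psi_t\rangle$, and the two terms combine to $i\nu\langle\psi_t, |\psi_t|^{2\mu}A(\psi_t) - A(|\psi_t|^{2\mu}\psi_t)\rangle$, which is precisely $i\nu\langle\psi_t,[|\psi_t|^{2\mu},A]\psi_t\rangle$ with the bracket interpreted as in the statement. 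Adding the two pieces yields~(\ref{Eq7}).

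The main obstacle is not the algebra — which is the short "straightforward calculation" the author alludes to — but the analytic justification: differentiation under the integral sign, the use of self-adjointness of $H$ (which requires $\psi_t$ and $A\psi_t$ to lie in the relevant operator domains), and integration by parts with no boundary contributions. For the observables actually used later ($A = x$, $A = x^2$, and momentum-type operators) with $V$ a free, Stark, or quadratic potential, these steps are legitimate provided the solution has enough Sobolev regularity and spatial decay, which is why the identity is stated and used at a "formal" level with the rigorous version relegated to Appendix~\ref{AppC}. I would therefore present the computation as above, flag the regularity hypotheses implicitly in force, and point to the appendix for the careful derivation of the special cases that feed into the virial identities.
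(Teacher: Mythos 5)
Your computation is correct and is precisely the ``straightforward calculation'' the paper invokes without writing out: differentiate $\langle\psi_t,A\psi_t\rangle$, substitute $\partial_t\psi_t=-i(H\psi_t+\nu|\psi_t|^{2\mu}\psi_t)$, and use self-adjointness of $H$ and of the real multiplication operator $|\psi_t|^{2\mu}$ to assemble the two commutator terms, with the nonlinear bracket interpreted exactly as in the statement. The only small inaccuracy is attributing the rigorous justification to Appendix~\ref{AppC}, which is itself only a formal derivation of the virial identity, but this does not affect the argument.
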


As a consequence it follows that

\begin {corollary} \label {Coro1}
Let $x$ be the position observable and let $\hat x = x$ be the associated multiplication operator, then
\bee
\frac {d\langle  \hat x \rangle^t}{dt} =   \langle \hat p \rangle^t \label {Eq8}
\eee
where $\hat p = -{i} \frac {\partial }{\partial x}$ is the associated operator to the momentum observable $p$. 
\end {corollary}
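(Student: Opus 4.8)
The plan is to apply Proposition \ref{Prop1} with the choice $A=\hat x$, the multiplication operator by $x$, and then to evaluate the two commutator terms appearing on the right-hand side of (\ref{Eq7}).

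First I would treat the Hamiltonian term $i\langle \psi_t,[H,\hat x]\psi_t\rangle_{L^2}$. Writing $H=-\frac12\partial_x^2+V(x)$, the potential $V$ acts by multiplication and hence commutes with $\hat x$, so $[H,\hat x]=-\frac12[\partial_x^2,\hat x]$. A direct computation gives $[\partial_x^2,\hat x]\phi=\partial_x^2(x\phi)-x\,\partial_x^2\phi=2\phi'$ for smooth $\phi$, whence $[H,\hat x]\phi=-\phi'=-i\hat p\,\phi$ since $\hat p=-i\partial_x$. Therefore $i[H,\hat x]=\hat p$, and the first term equals $\langle\psi_t,\hat p\,\psi_t\rangle_{L^2}=\langle\hat p\rangle^t$.

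Next I would dispose of the nonlinear term $i\nu\langle\psi_t,[\,|\psi_t|^{2\mu},\hat x\,]\psi_t\rangle_{L^2}$, which vanishes identically: $|\psi_t|^{2\mu}$ is itself a multiplication operator, so $[\,|\psi_t|^{2\mu},\hat x\,]\phi=|\psi_t|^{2\mu}x\phi-x\,|\psi_t|^{2\mu}\phi=0$. Summing the two contributions in (\ref{Eq7}) then yields (\ref{Eq8}).

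There is no serious obstacle here; the computation is elementary once Proposition \ref{Prop1} is available, and the only point requiring a little care is the bookkeeping of the factors of $i$ in the identification $-\partial_x=-i\hat p$. If one wants a fully rigorous statement rather than the formal one, one should additionally check that $\psi_t$ is regular enough for the manipulation $[\partial_x^2,\hat x]\psi_t=2\partial_x\psi_t$ and for the integrations by parts implicit in (\ref{Eq7}) to be legitimate; this is precisely the type of regularity issue deferred to Appendix \ref{AppC}.
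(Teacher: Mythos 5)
Your proposal is correct and follows essentially the same route as the paper: both reduce the claim to Proposition \ref{Prop1} with $A=\hat x$, observe that the nonlinear term vanishes because $|\psi_t|^{2\mu}$ and $\hat x$ are both multiplication operators, and evaluate $i[H,\hat x]=i[\tfrac12\hat p^2,\hat x]=\hat p$. You merely spell out the commutator computation and the sign bookkeeping more explicitly than the paper does.
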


\begin {proof}
Corollary \ref {Coro1} immediately follows from (\ref {Eq7}) since $ [ |\psi |^{2\mu } , \hat x  ] =0$; hence
\be
\frac {d\langle  \hat x \rangle}{dt} =  {i} 
\left \langle \psi , \left [ H , \hat x \right ] \psi \right \rangle = {i} 
\left \langle \psi , \left [ \frac {\hat p^2}{2} , \hat x \right ] \psi \right \rangle =  \langle \hat p \rangle \, .
\ee
\end {proof}

Similarly

\begin {corollary} \label {Coro2}
Let $p$ be the momentum observable with associated operator  $\hat p = -{i} \frac {\partial }{\partial x}$, then
\bee
\frac {d\langle  \hat p \rangle^t}{dt} = -\left \langle \frac {dV}{dx} \right \rangle^t \, , \ \mbox { where } 
\left \langle \frac {dV}{dx} \right \rangle^t = \left \langle \psi_t ,\frac {dV}{dx}  \psi_t \right \rangle_{L^2 } . \label {Eq9}
\eee
\end {corollary}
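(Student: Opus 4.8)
The plan is to invoke the Ehrenfest generalized Theorem, Proposition \ref{Prop1}, with the specific choice $A = \hat p = -i\frac{\partial}{\partial x}$, and then to show that the two terms on the right-hand side of (\ref{Eq7}) reduce, respectively, to $-\left\langle \frac{dV}{dx}\right\rangle^t$ and to zero. Thus the proof is entirely a matter of evaluating the commutator term $i\langle \psi,[H,\hat p]\psi\rangle$ and the nonlinear term $i\nu\langle\psi,[|\psi|^{2\mu},\hat p]\psi\rangle$.

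First I would compute $[H,\hat p]$. Writing $H = \frac{\hat p^2}{2} + V(x)$ and using that $\hat p$ commutes with $\hat p^2$, only the potential survives, and for a test function $\psi$ one finds $[V,\hat p]\psi = V(-i\psi') - (-i)(V\psi)' = iV'\psi$, so that $[H,\hat p]$ acts as multiplication by $i\frac{dV}{dx}$. Hence $i\langle\psi,[H,\hat p]\psi\rangle = i\cdot i\langle\psi, V'\psi\rangle = -\left\langle\frac{dV}{dx}\right\rangle^t$, which is precisely the claimed right-hand side. Next I would handle the nonlinear contribution; here, in contrast with the case $A=\hat x$ in Corollary \ref{Coro1}, the commutator itself is \emph{not} identically zero. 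Setting $f = |\psi|^{2\mu}$, the definition gives $[|\psi|^{2\mu},\hat p]\psi = f(-i\psi') - (-i)(f\psi)' = if'\psi$, so that $i\nu\langle\psi,[|\psi|^{2\mu},\hat p]\psi\rangle = -\nu\int_\R |\psi|^2\,(|\psi|^{2\mu})'\,dx$. Writing $g = |\psi|^2$, the integrand is $\mu\, g^\mu g' = \frac{\mu}{\mu+1}\bigl(g^{\mu+1}\bigr)'$, a total derivative, so the integral over $\R$ vanishes and the nonlinear term drops out, yielding (\ref{Eq9}).

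The only delicate point is the last step: the vanishing of the boundary term $\bigl[|\psi_t|^{2\mu+2}\bigr]_{-\infty}^{+\infty}$ requires sufficient decay and regularity of the solution $\psi_t$. As elsewhere in this paper I would carry this out at the formal level, the rigorous version being obtained from the local well-posedness framework together with a density/approximation argument in the spirit of Appendix \ref{AppC}; this is the main (and essentially the only) obstacle, and it is of a technical rather than conceptual nature.
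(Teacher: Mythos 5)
Your proposal is correct and follows essentially the same route as the paper: apply Proposition \ref{Prop1} with $A=\hat p$, reduce the linear commutator to $-\langle V'\rangle$, and show the nonlinear term vanishes because the integrand is a total derivative of $|\psi|^{2\mu+2}$ (up to a constant), the paper writing it as $\rho^{\mu}\rho'$ with $\rho=|\psi|^2$ after one integration by parts, while you expand the commutator directly to reach $\mu\,g^{\mu}g'$ — a cosmetic difference. Your remark on the decay needed for the boundary term to vanish is a fair observation that the paper leaves implicit.
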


\begin {proof} Corollary \ref {Coro2} follows from (\ref {Eq7}) if we prove that $ \left \langle \psi , \left [ |\psi |^{2\mu } , \hat p \right ] \psi 
\right \rangle  =0$; indeed
\be
&& \left \langle \psi , \left [ |\psi |^{2\mu } , \hat p \right ] \psi 
\right \rangle  = -{i} \int_{\R} \bar \psi \left [ |\psi |^{2\mu} \frac  {\partial \psi}{\partial x} - \frac  {\partial \left ( |\psi |^{2\mu} \psi \right )}{\partial x} 
\right ] dx =  \\ 
&& \ \ =  -{i} \int_{\R} |\psi |^{2\mu}  \left [ \bar \psi \frac  {\partial \psi}{\partial x} + \psi \frac  {\partial \bar \psi}{\partial x} \right ] dx = -{i} \int_{\R} 
\rho^{\mu}   \frac  {\partial \rho}{\partial x}  dx = 0  
\ee
where $\rho = |\psi |^2$. \ Hence
\be
\frac {d\langle  \hat p \rangle}{dt} = {i} 
\left \langle \psi , \left [ H , \hat p \right ] \psi \right \rangle =  {i} 
\left \langle \psi , \left [ V , \hat p \right ] \psi \right \rangle = -\left \langle \frac {dV}{dx}  \right \rangle \, . 
\ee
\end {proof}

\begin {remark} \label {Nota1}
Let $\psi_t$ be the solution to the NLS (\ref {Eq1}); then the expectation values $\langle \hat x \rangle$ of the position observable and $\langle \hat p \rangle$ of the 
momentum observable  satisfy to the ``classical canonical equation of motion'' (\ref {Eq8}-\ref {Eq9}). \ In the case where $V(x)$ is a free, Stark or quadratic 
potential then the system (\ref {Eq8}-\ref {Eq9}) has an explicit solution that does not depend on the nonlinearity parameter $\nu$.
\end {remark}

\begin {remark} \label {Nota1Bis} 
Ehrenfest's generalized Theorem was also proved by \cite {ANS} for nonlinear Schr\"odinger equations with a 2 or 3-dimensional confining  harmonic potential and under the 
effect of a rotating force. \ In such a framework it has also been proved that, under some circumstances (see Proposition 4.3 by \cite {ANS}), the solution is such 
that $\langle \hat x \rangle^t$ and $\langle \hat p \rangle^t$ go to $+\infty$ when $t$ goes to $\pm \infty$.
\end {remark}

\begin {remark} \label {Nota2}
However, we should point out that the Ehrenfest's generalized Theorem (\ref {Eq7}) for nonlinear Schr\"odinger does not give the same result of the usual one for 
linear Schr\"odinger equations
\bee
\frac {d \langle A \rangle }{dt} = {i} \left \langle \left [ H,A \right ] \right \rangle \label {Eq10}
\eee
if the classical observable is the Hamiltonian function $h (x,p) = \frac {1}{2}p^2 + V (x)$ with associated operator $H$; indeed, in such a case 
\be
\frac {d\langle  H \rangle}{dt} =  {i} \nu \left \langle \psi , \left [ |\psi |^{2\mu} , H \right ] \psi \right \rangle =  -  {\nu} \Im \left \langle \psi , 
|\psi |^{2\mu } \hat p^2 \psi 
\right \rangle 
\ee
is not generically zero. \ In fact, $\langle H \rangle$ is an integral of motion only when $\nu =0$; otherwise the integral of motion is the energy ${\En} (\psi)$ 
defined by (\ref {Eq3}).
\end {remark}

\section {Blow-up for the free NLS} \label {Sez3} We consider now the case where the external potential is zero: $V(x) \equiv 0$. We assume that 
\bee
\psi_0 \in \Sigma := H^1 (\R ) \cap {\mathcal D} (\hat x)\, , \label {Eq11}
\eee
where ${\mathcal D} (\hat x)$ is the domain of the operator $\hat x$. \ Then the solution $\psi (x,t)$ to (\ref {Eq1}) locally exists and it belongs to 
$C((t^\star_- ,t^\star_+ ), \Sigma)$ and the conservation of the norm $\| \psi \|_{L^2}$ and of the energy ${\mathcal E}$ hold true (see, e.g., Theorem 3.10 
by \cite {SS}). \ If $t^\star_\pm =\pm \infty$ then the solution globally exists; if not, i.e. $t^\star_+ <+\infty $ (resp. $t^\star_- >-\infty$) then 
\be
\lim_{t\to t^\star_\pm \mp 0} \|  \psi \|_{H^1} = \infty 
\ee
and thus blow-up occurs in the future (resp. in the the past). \ We observe that blow-up cannot occur when $\nu \ge 0$ because of the conservation of the 
energy (\ref {Eq3}). \ Furthermore, we can also point out that when blow-up occurs for $\nu <0$ then we also have that 
\be
\lim_{t\to t^\star_\pm \mp 0} \| \psi \|^{2\mu +2}_{L^{2\mu +2}} = \infty 
\ee
because conservation of the energy.

\subsection {Criterion for blow-up by means of the Zakharov-Glassey method} \label {Sez3_1}
Estimates of the momentum of inertia can be obtained by means of the one-dimensional virial identity (\ref {Eq4}) the with initial conditions
\bee
\I_0 := \I (0) = \| x \psi_0 \|_{L^2}^2 \label {Eq12}
\eee
and 
\bee 
\dot \I_0 := \frac { d\I (0)}{dt} = 2 \Im \left [ \int_{\R} x \bar \psi_0 (x) \frac {\partial \psi_0 (x)}{\partial x} dx \right ] = 2 \Re \left \langle \hat x \psi_0 , 
\hat p \psi_0 \right \rangle \, . \label {Eq13}
\eee

Theorem 5.1 by \cite {SS} gives a condition for blow-up in the future (and similarly in the past). \ Specifically, when $\nu <0$ and $\mu \ge 2$ then there 
exists a $t_+^\star \in (0,+\infty)$ such that 
\be
\lim_{t\to t^\star_+ -0} \|  \psi \|_{H^1} = \infty  
\ee
if any of the following conditions is satisfied:
\begin {itemize}

\item [i.] $C_{\I} <0$;

\item [ii.] $C_{\I}=0$ and $\dot \I_0  <0$;

\item [iii.] $C_{\I}>0$ and $\dot \I_0 \le - \sqrt {2C_{\I} {\I_0 } }$;

\end {itemize}

where $C(\I )= 4 \En (\psi_0 )$.

The proof of Theorem 5.1 by \cite {SS} is quite simple: if $\mu \ge 2$ and $\nu \le 0$ then (\ref {Eq4}) implies that 
\be
\frac {d^2 \I}{dt^2} \le C_{\I}
\ee
and thus
\bee
\I (t) \le M(t):=\frac 12 C_{\I} t^2 + \dot \I_0 t + \I_0 \label {Eq14}
\eee
If any of the three conditions i.-iii. are satisfied then there exists $\tilde T^{\I}_+>0$ such that $M(\tilde T^{\I}_+)=0$ and thus there exists a 
$0< T_+^{\I} <\tilde T^{\I}_+$ such that $\I (T_+^{\I} )=0$. \ From this fact and from (\ref {Eq28}) the occurrence of blow-up in the future follows at some 
$t^\star_+ <T_+^{\I}$.

\subsection {Criterion for blow-up by means of the enhanced Zakharov-Glassey method}\label {Sez3_2}
We improve now the previous criterion by  applying the same argument to the analysis of the variance and making use of the Ehrenfest's generalized Theorem. \ Indeed, 
if the potential $V(x)$ is exactly zero then  (\ref {Eq8}-\ref {Eq9}) imply that 
\bee
\langle \hat p \rangle \equiv \hat p_0    \ \mbox { and } \ \langle \hat x \rangle =  { \hat p_0  }t + \hat x_0 \, , \ \mbox { where } \  
\hat x_0 := \left. \langle \hat x \rangle^t \right |_{t=0} \mbox { and } \hat p_0  := \left. \langle \hat p \rangle^t \right |_{t=0} . \label {ini}
\eee

\begin {remark}
We point out that in the free NLS problem the conservation of the momentum $\langle \hat p \rangle$ and the fact that the \emph {center of mass} of the 
wavepacket $\langle \hat x \rangle$ moves at constant speed can be derived by making use of arguments of invariance of space translation (see, e.g. \S 2.3 by \cite {SS}).
\end {remark}

Since (\ref {ini}) we have that
\be
\V (t) &=& \I (t) - \langle \hat x \rangle^2 = \I (t) - \left [  { \hat p_0  }t +  \hat x_0 \right ]^2  \le N(t)
\ee
where
\bee
N(t) &:= & M(t) - \left [ { \hat p_0  }t + \hat x_0 \right ]^2 \nonumber \\
& = &  \left [ \frac 12 C_{\I} - \hat p_0 ^2 \right ] t^2 + \left [ \dot \I_0  - 2 \hat p_0 \hat x_0 \right ] t + \left [ \I_0 - \hat x_0^2 \right ]  \label {Eq15}
\eee
Thus we have the following improvement of Theorem 5.1 by \cite {SS}.

\begin {theorem} \label {Teo1} Let $\nu <0$ and $\mu \ge 2$, let $\psi_0 \in \Sigma$; then we have blow-up in the future if any of the following conditions is satisfied:

\begin {itemize}
 
\item [i'.] $ C_{\I} < 2 \hat p_0^2$;

\item [ii'.] $ C_{\I} = 2 \hat p_0^2$ and $ \dot \I_0 < 2 \hat p_0 \hat x_0 $;

\item [iii'.] $ C_{\I} > 2 \hat p_0^2$ and 
\be
\left [ \dot \I_0 - 2 
\hat p_0 \hat x_0 \right ] \le - 
2\sqrt { \left [ \frac 12 C_{\I} - 
\hat p_0^2 \right ]\left [ \I_0 - 
\hat x_0^2 \right ] }
\ee
 
\end {itemize}

\end {theorem}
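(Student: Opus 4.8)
The plan is to mirror the proof of Theorem 5.1 of \cite{SS} recalled just above, but applied to the variance $\V(t)$ instead of the moment of inertia $\I(t)$. First I would record that, by Corollary~\ref{Coro1} and Corollary~\ref{Coro2} with $V\equiv 0$, the expectation values satisfy $\langle\hat p\rangle^t\equiv\hat p_0$ and $\langle\hat x\rangle^t=\hat p_0 t+\hat x_0$, so that the explicit parabola $N(t)$ of equation~(\ref{Eq15}) is available and $\V(t)\le N(t)$ for all $t$ in the maximal interval of existence. The coefficient of $t^2$ in $N(t)$ is $\tfrac12 C_{\I}-\hat p_0^2$, the coefficient of $t$ is $\dot\I_0-2\hat p_0\hat x_0$, and the constant term is $\I_0-\hat x_0^2=\V(0)\ge 0$ (nonnegativity being exactly the Cauchy--Schwarz inequality $[\langle\hat x\rangle^0]^2\le\I_0$, which also justifies that the square root in iii'. is real).

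Next I would argue, under each of the hypotheses i'.--iii'., that the quadratic $N(t)$ has a positive real root $\tilde T_+^{\V}$. In case i'., the leading coefficient $\tfrac12 C_{\I}-\hat p_0^2$ is strictly negative, so $N$ is an upside-down parabola with $N(0)\ge 0$, hence it is eventually negative and has a positive root. In case ii'., the leading coefficient vanishes and $N$ is affine with negative slope $\dot\I_0-2\hat p_0\hat x_0<0$, so again it crosses zero at some positive $t$. In case iii'., the leading coefficient is strictly positive, $N(0)\ge 0$, and the condition on $\dot\I_0-2\hat p_0\hat x_0$ is precisely the requirement that the discriminant $[\dot\I_0-2\hat p_0\hat x_0]^2-4[\tfrac12 C_{\I}-\hat p_0^2][\I_0-\hat x_0^2]$ be nonnegative together with the vertex lying at positive $t$; hence $N$ has a positive root. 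In all three cases, since $\V(t)\le N(t)$ and $\V$ is continuous and nonnegative on its interval of existence, either the interval of existence ends before $\tilde T_+^{\V}$ — in which case blow-up in $H^1$ has already occurred by the finite-time alternative — or $\V$ must vanish at some $T_+^{\V}\in(0,\tilde T_+^{\V}]$.

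Finally I would invoke the functional inequality (\ref{Eq29}) relating $\|\psi_t\|_{L^{2\mu+2}}$ (hence $\|\psi_t\|_{H^1}$, using conservation of norm and energy for $\nu<0$, $\mu\ge 2$) to $\V(t)^{-\theta}$ for a suitable positive exponent, exactly as (\ref{Eq28}) is used in Section~\ref{Sez3_1}: the vanishing of $\V$ at $T_+^{\V}$ forces $\|\psi_t\|_{H^1}\to\infty$ as $t$ approaches some $t_+^{\star}\le T_+^{\V}$, which is the asserted blow-up in the future. The argument in the past is identical, using a negative root of $N$.

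The only genuine subtlety — and the step I expect to require the most care — is the differential inequality $\tfrac{d^2}{dt^2}\V(t)\le C_{\I}-2\hat p_0^2$ that underlies $\V(t)\le N(t)$. It follows by combining the virial identity (\ref{Eq4}) for $\tfrac{d^2}{dt^2}\I$ (whose right-hand side is $\le C_{\I}$ for $\nu<0$, $\mu\ge 2$) with the identity $\tfrac{d^2}{dt^2}[\langle\hat x\rangle^t]^2=2\hat p_0^2$ coming from Ehrenfest's theorem; one must make sure this computation is legitimate on the whole interval of existence, i.e. that $\psi_t\in\Sigma$ so that all the moments and their derivatives are well defined, which is guaranteed by the local well-posedness in $\Sigma$ stated at the start of Section~\ref{Sez3}. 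Everything else is the elementary analysis of a quadratic polynomial.
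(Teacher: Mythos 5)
Your proposal is correct and follows essentially the same route as the paper: the Ehrenfest-derived formula for $\langle\hat x\rangle^t$, the comparison $\V(t)\le N(t)$ coming from the virial inequality $\ddot\I\le C_{\I}$, the elementary root analysis of the quadratic $N$, and the functional inequality (\ref{Eq29}) to convert $\V(T_+^{\V})=0$ into $H^1$ blow-up. The only slip is cosmetic: (\ref{Eq29}) bounds $\|\psi_t\|_{L^2}^2$ (which is conserved and equal to one) by $C\sqrt{\V}\,\|\psi_t'\|_{L^2}$ rather than involving the $L^{2\mu+2}$ norm, but the conclusion $\|\psi_t'\|_{L^2}\to\infty$ as $\V\to 0$ is exactly what is needed.
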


\begin {proof} Indeed, if any of the three conditions i.-iii. are satisfied then there exists $\tilde T^{\V}_+>0$ such that $N(\tilde T^{\V}_+)=0$ and thus there 
exists $0<T_+^{\V}<\tilde T^{\V}_+$ such that $\V (T_+^{\V})=0$. \ From this fact and from (\ref {Eq29}) the occurrence of blow-up follows for some $t^\star_+ \le T_+^{\V}$.
\end {proof}

\begin {remark} \label {Nota3}
In fact, under condition $i'.$ we have blow-up in the future and in the past, too; under conditions $ii'.$ and $iii'.$ we have blow-up in the future only.
\end {remark}

\begin {remark} \label {Nota4}
We remark that condition $i'.$ for blow-up is not new and it has been already proved under some circumstances, see e.g. Corollary 1.2 by \cite {DWZ} and Theorem 7 by \cite {R}.
\end {remark}


\section {Blow-up for the NLS with Stark potential} \label {Sez4}
Let the potential $V(x)=\alpha x$ be a Stark potential, where $\alpha \in \R \setminus \{ 0\}$, the occurrence of blow-up in such a case has been 
considered by \cite {Car4,Li,SZ} . \ Again we assume (\ref {Eq11}). 

\subsection {Criterion for blow-up by means of the Zakharov-Glassey method} \label {Sez4_1}
In the case of Stark potentials it has been proved that the solutions to the NLS (\ref {Eq1}) with a Stark potential can be derived from the ones of the 
free NLS, see Theorem 2.1 by \cite {Car4}. \ Then one can make use of the results obtained in Section \ref {Sez3_1}; in particular, Corollary 3.3 by \cite {Car4} 
states that blow-up occurs in the past and in future when 
\bee
\frac 12 \| \psi_0' \|_{L^2} + \frac {\nu}{\mu +1} \| \psi_0 \|^{2\mu +2}_{L^{2\mu +2}} < 0\, . \label {Eq16Bis}
\eee

\subsection {Criterion for blow-up by means of the enhanced Zakharov-Glassey method} \label {Sez4_2}
If the potential $V(x)=\alpha x$ is a Stark potential, where $\alpha \in \R \setminus \{ 0\}$ then  (\ref {Eq8}-\ref {Eq9}) imply that 
\bee
\langle \hat p \rangle^t = - \alpha t +  \hat p_0  \ \mbox { and } \ \langle \hat x \rangle^t =  - \frac 12 \alpha t^2 +  \hat p_0  t  +  \hat x_0  \label {Eq16}
\eee
where 
\be 
\hat x_0  = \left. \langle \hat x \rangle^t \right |_{t=0} \ \mbox { and } \ \hat p_0 =  \left. \langle \hat p \rangle^t \right |_{t=0}\, . 
\ee
Estimates of the momentum of inertia can be obtained by means of the one-dimensional virial identity (\ref {Eq35}) with initial conditions (\ref {Eq12}-\ref {Eq13}). 

If $\nu (\mu -2) \le 0$ then (\ref {Eq16}) and (\ref {Eq35}) imply that
\be
\frac {d^2 \I}{dt^2} \le 4 \En - 6 \alpha \left (- \frac 12 \alpha t^2 + 
\hat p_0  t + \hat x_0 \right )
\ee
where
\be
\En = \frac 12 \| \psi ' \|_{L^2}^2 + \alpha \langle \hat x \rangle + \frac {\nu}{\mu+1} \| \psi \|_{L^{2\mu+2}}^{2\mu +2} \, , 
\ee
and thus
\be
\I (t) \le \frac 14 \alpha^2 t^4 - \alpha \hat p_0 t^3 + \left [ 2 \En - 3 \alpha  \hat x_0  \right ] t^2 + \dot \I_0 t + \I_0 \, ,
\ee
$\I_0$ and $\dot \I_0$ are given by (\ref {Eq12}) and (\ref {Eq13}). \ Therefore,
\be
\V (t) &=& \I (t) - \left [ \langle \hat x \rangle^t \right ]^2 \\ 
&\le& \left [ \| \psi_0' \|^2 + \frac {2\nu}{\mu+1} \| \psi \|_{L^{2\mu+2}}^{2\mu +2} - \hat p_0^2  \right ] t^2 + 
2\left [ \Re \langle \hat x \psi_0 , \hat p \psi_0 \rangle  - \hat p_0 \hat x_0  \right ] t + \V (0)
\ee

Thus, we can conclude that

\begin {theorem} \label {Teo2}
If
\begin {itemize}

\item [i.] $ \| \psi_0' \|^2 + \frac {2\nu}{\mu+1} \| \psi_0 \|_{L^{2\mu+2}}^{2\mu +2} <   \hat p_0^2 $ then we have blow-up in the past and in the future;

\item [ii.]  $ \| \psi_0' \|^2 + \frac {2\nu}{\mu+1} \| \psi_0 \|_{L^{2\mu+2}}^{2\mu +2} =   \hat p_0^2 $ and $\Re \langle \hat x \psi_0 , \hat p \psi_0 \rangle  - 
\hat p_0 \hat x_0  \not= 0$ we have blow-up in the past or in the future;

\item [iii.]  $ \| \psi_0' \|^2 + \frac {2\nu}{\mu+1} \| \psi_0 \|_{L^{2\mu+2}}^{2\mu +2} >   \hat p_0^2 $ and 
\be
\left [ \Re \langle \hat x \psi_0 , \hat p \psi_0 \rangle  - \hat p_0 
\hat x_0 \right ]^2 > \left [  \| \psi_0' \|^2 + 
\frac {2\nu}{\mu+1} \| \psi_0 \|_{L^{2\mu+2}}^{2\mu +2} - \hat p_0^2  \right ] \V (0)
\ee
we have blow-up in the past or in the future.

\end {itemize}

\end {theorem}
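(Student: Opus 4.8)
The plan is to argue exactly as in the proof of Theorem \ref{Teo1}, now taking as starting point the variance estimate already obtained just above: under the standing restriction $\nu(\mu-2)\le 0$ — which is what lets the virial inequality (\ref{Eq35}) combine with the Ehrenfest formulas (\ref{Eq16}) and conservation of energy — one has, for every $t$ in the maximal interval of existence of $\psi_t$, the bound $\V(t)\le N(t)$ with $N(t)=At^2+Bt+C$, where $A=\|\psi_0'\|^2+\frac{2\nu}{\mu+1}\|\psi_0\|_{L^{2\mu+2}}^{2\mu+2}-\hat p_0^2$, $B=2\left(\Re\langle\hat x\psi_0,\hat p\psi_0\rangle-\hat p_0\hat x_0\right)$ and $C=\V(0)$. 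The point worth underlining is that, although $\I(t)$ is controlled only by a quartic polynomial, its $t^4$ and $t^3$ monomials cancel against those of $[\langle\hat x\rangle^t]^2$, so the resulting bound for $\V$ is genuinely quadratic, just as in the free case. Since $\V(t)\ge 0$ identically, everything then reduces to locating a zero of $N$: if $N(\tilde T)\le 0$ for some $\tilde T>0$ (resp. $\tilde T<0$), then $\V(\tilde T)\le 0$, so either $\psi_t$ has already ceased to exist on $(0,\tilde T]$ (resp. $[\tilde T,0)$) or $\V$ has a first zero $T^\V$ there, and in the latter case the functional inequality (\ref{Eq29}) together with conservation of the $L^2$ norm forces $\|\psi_t\|_{H^1}\to\infty$ at some $t^\star$ with $|t^\star|\le|T^\V|\le|\tilde T|$; this concluding step is verbatim the one used for Theorem \ref{Teo1}.

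Next I would record the elementary remark that $C=\V(0)>0$ for every admissible datum $\psi_0\in\Sigma$, since a nonzero $L^2$ function cannot have zero variance, and then split into the three cases of the statement. In case i, $A<0$ makes $N$ a concave parabola with $N(0)=\V(0)>0$, hence $N$ has one strictly positive and one strictly negative root, and blow-up occurs both in the future and in the past. In case ii, $A=0$ together with $B\ne 0$ makes $N(t)=Bt+C$ affine with $N(0)>0$ and nonzero slope, so its only root $-C/B$ is positive when $B<0$ and negative when $B>0$, and blow-up occurs in the future or in the past. In case iii the hypothesis is precisely $B^2>4AC$ with $A>0$, so $N$ has two distinct real roots whose product equals $C/A=\V(0)/A>0$; hence the two roots have the same sign, and since their sum is $-B/A$ with $B\ne 0$ (forced by $B^2>4AC>0$) that common sign is the sign of $-B$, so the roots are both positive if $B<0$ and both negative if $B>0$ — again blow-up in the future or in the past.

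I do not expect any real obstacle, because the analytic content has all been spent in deriving $\V(t)\le N(t)$ from the virial identity, and what remains is a purely elementary discussion of when a quadratic changes sign. The one point that deserves a line of care is the time-localisation in cases ii and iii — deciding whether the conclusion is ``future'' or ``past'' — which is settled by the relations between the roots and the coefficients of $N$ together with the signs of $B$ and of $\V(0)$; in fact this pins the matter down, the blow-up taking place in the future precisely when $\Re\langle\hat x\psi_0,\hat p\psi_0\rangle<\hat p_0\hat x_0$ and in the past precisely when the opposite strict inequality holds, which one might add as a remark after the proof.
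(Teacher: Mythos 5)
Your proposal is correct and follows essentially the same route as the paper: the quadratic upper bound $\V(t)\le N(t)$ obtained by combining the virial identity (\ref{Eq35}), the Ehrenfest formulas (\ref{Eq16}) and conservation of energy (with the $t^4$ and $t^3$ terms cancelling against those of $[\langle\hat x\rangle^t]^2$), followed by the sign discussion of the quadratic and the functional inequality (\ref{Eq29}). The only difference is that you spell out the elementary root analysis and the future/past localisation, which the paper leaves implicit; this is a faithful completion, not a different argument.
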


\begin {remark} \label {Nota6}
Since $\| \psi_0' \|^2 = \| \hat p \psi_0 \|^2$ and $|\hat p_0 | = |\langle \psi_0 ,\hat p \psi_0 \rangle | \le \| \hat p \psi_0  \|$ then conditions $i.$ and 
$ii.$ holds true only when $\nu <0$.
\end {remark}

\begin {remark} \label {Nota7}
We remark that the blow-up condition (\ref {Eq16Bis}) given by \cite {Car4} agrees with Theorem \ref {Teo2}, indeed (\ref {Eq16Bis}) implies i..
\end {remark}

\section {Blow-up for NLS with harmonic/inverted oscillator potential} \label {Sez5}

In this section we consider the cases of \emph {harmonic oscillator} potential $V(x) = \alpha x^2$, where $\alpha > 0$, and \emph {inverted oscillator} potential, 
where $\alpha <0$. \ The occurrence of blow-up in these cases has been considered by several authors under different assumptions 
\cite {Car1,Car2,Car3,CG,Jao,SZ2,XL,YLZ,ZA}. 

In this Section we consider the blow-up conditions obtained by means of the enhanced Zakharov-Glassey's method and then we compare these results with the previous 
ones obtained by Carles \cite {Car1,Car2,Car3}.

We require now some preliminary results.

Also in this case assume (\ref {Eq11}), then local in time existence of the solution to (\ref {Eq1}) in $\Sigma$ and conservation of the norm and of the 
energy ${\mathcal E}$ follows (see, e.g., \cite {Car3}).

Corollary \ref {Coro2} implies that $\frac {d{\langle {\hat {p}} \rangle^t}}{dt} =  -2\alpha \langle \hat x \rangle^t$; hence the expectation value of the 
position observable coincides with the classical solution. \ More precisley, let 
\be
\lambda^2 = {2 |\alpha |} \, , \   
\hat x_0 = \left. \langle \hat x \rangle^t \right |_{t=0}  \ \mbox { and } \ 
\hat p_0 = \left. \langle \hat p \rangle^t \right |_{t=0} ;
\ee
then the Ehrenfest's generalized Theorem implies that:

\begin {itemize}

\item [-] In the case of the \emph {harmonic oscillator} where $\alpha  >0$, then
\bee
\left \{
\begin {array}{lcl}
\langle \hat x \rangle^t &=& \hat x_0 \cos (\lambda t) + \frac {\hat p_0}{ \lambda}  \sin (\lambda t) \\ 
\langle \hat p \rangle^t &=& - \lambda \hat x_0 \sin (\lambda t) + \hat p_0 \cos (\lambda t) 
\end {array}
\right. \, . \label {Eq17}
\eee

\item [-] In the case of the \emph {inverted oscillator} where $\alpha <0$, then
\bee
\left \{
\begin {array}{lcl}
\langle \hat x \rangle^t &=& \hat x_0 \cosh (\lambda t) + \frac {\hat p_0}{ \lambda}  \sinh (\lambda t) \\ 
\langle \hat p \rangle^t &=& \lambda \hat x_0 \sinh (\lambda t) + \hat p_0 \cosh (\lambda t) 
\end {array}
\right. \, . \label {Eq18}
\eee
\end {itemize}

In a previous paper \cite {Car2} devoted to the analysis of the occurrence of blow-up it has been found that, in the case of harmonic/inverted potential, the 
momentum of inertia $\I$ satisfies to the following equation 
\bee
\frac {d^2 \I}{d t^2} + {8 \alpha }\I =  C_{\I} +  {2\nu} \frac {\mu -2}{\mu +1} \| \psi \|_{L^{2\mu +2}}^{2\mu +2} \, , \ 
C_{\I} =  {4 {\mathcal E} (\psi_0 )} \, . \label {Eq19}
\eee

As in the free case we consider now the equation for the variance $\V$.

\begin {lemma} \label {Lemma1} The variance $\V$ satisfies to the following equation
\bee
\frac {d^2 \V }{dt^2} + 8 \alpha  \V = C_{\V} + 
{2\nu} \frac {\mu -2}{\mu+1} \| \psi \|_{L^{2\mu +2}}^{2\mu +2}\, , \label {Eq20}
\eee
where
\bee
C_{\V} = - 2  {\hat p_0^2} -  4 \alpha \hat x_0^2 +C_{\I} \, . \label {Eq21}
\eee
\end {lemma}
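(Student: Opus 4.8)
The plan is to derive equation (\ref{Eq20}) directly from the definition $\V(t) = \I(t) - \left[ \langle \hat x \rangle^t \right]^2$ by differentiating twice and substituting the already-established identities. First I would recall from (\ref{Eq19}) that $\frac{d^2 \I}{dt^2} = C_{\I} + 2\nu \frac{\mu-2}{\mu+1}\|\psi\|_{L^{2\mu+2}}^{2\mu+2} - 8\alpha \I$. The only new ingredient needed is the second derivative of $\left[ \langle \hat x \rangle^t \right]^2$, which by Corollaries \ref{Coro1} and \ref{Coro2} satisfies
\be
\frac{d}{dt}\left[ \langle \hat x \rangle^2 \right] = 2 \langle \hat x \rangle \langle \hat p \rangle, \quad
\frac{d^2}{dt^2}\left[ \langle \hat x \rangle^2 \right] = 2 \langle \hat p \rangle^2 + 2 \langle \hat x \rangle \frac{d \langle \hat p \rangle}{dt} = 2 \langle \hat p \rangle^2 - 4\alpha \langle \hat x \rangle^2,
\ee
using $\frac{d\langle\hat p\rangle}{dt} = -\langle V'\rangle = -2\alpha\langle\hat x\rangle$ for $V(x)=\alpha x^2$. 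Subtracting, I obtain
\be
\frac{d^2 \V}{dt^2} = C_{\I} + 2\nu \frac{\mu-2}{\mu+1}\|\psi\|_{L^{2\mu+2}}^{2\mu+2} - 8\alpha \I - 2\langle\hat p\rangle^2 + 4\alpha \langle\hat x\rangle^2 = -8\alpha \V + \left(C_{\I} - 2\langle\hat p\rangle^2 - 4\alpha\langle\hat x\rangle^2\right) + 2\nu \frac{\mu-2}{\mu+1}\|\psi\|_{L^{2\mu+2}}^{2\mu+2},
\ee
where I used $-8\alpha\I + 4\alpha\langle\hat x\rangle^2 = -8\alpha\V - 4\alpha\langle\hat x\rangle^2$.

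The remaining task is to show that the bracketed term $C_{\I} - 2\langle\hat p\rangle^2 - 4\alpha\langle\hat x\rangle^2$ is actually constant in $t$ and equals $C_{\V} = C_{\I} - 2\hat p_0^2 - 4\alpha\hat x_0^2$. This is where I would plug in the explicit solutions (\ref{Eq17}) for the harmonic case and (\ref{Eq18}) for the inverted case. In the harmonic case, $\langle\hat p\rangle^2 + 2\alpha\langle\hat x\rangle^2 = \langle\hat p\rangle^2 + \lambda^2\langle\hat x\rangle^2$ (since $\lambda^2 = 2\alpha$), and substituting the trigonometric expressions, the cross terms cancel and one gets $\hat p_0^2 + \lambda^2 \hat x_0^2$ by the Pythagorean identity — i.e. $2\langle\hat p\rangle^2 + 4\alpha\langle\hat x\rangle^2 = 2\hat p_0^2 + 4\alpha\hat x_0^2$. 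In the inverted case $\lambda^2 = -2\alpha = 2|\alpha|$, so $2\langle\hat p\rangle^2 + 4\alpha\langle\hat x\rangle^2 = 2\langle\hat p\rangle^2 - 2\lambda^2\langle\hat x\rangle^2$, and the hyperbolic identity $\cosh^2 - \sinh^2 = 1$ again collapses this to $2\hat p_0^2 - 2\lambda^2 \hat x_0^2 = 2\hat p_0^2 + 4\alpha\hat x_0^2$. Either way the bracket is the constant $C_{\V}$ of (\ref{Eq21}), completing the proof.

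Alternatively — and perhaps more cleanly — I could observe that $\frac{d}{dt}\left(\langle\hat p\rangle^2 + 2\alpha\langle\hat x\rangle^2\right) = 2\langle\hat p\rangle\frac{d\langle\hat p\rangle}{dt} + 4\alpha\langle\hat x\rangle\frac{d\langle\hat x\rangle}{dt} = 2\langle\hat p\rangle(-2\alpha\langle\hat x\rangle) + 4\alpha\langle\hat x\rangle\langle\hat p\rangle = 0$, so the classical ``energy'' $\langle\hat p\rangle^2 + 2\alpha\langle\hat x\rangle^2$ is conserved without needing the closed-form solutions at all; its value at $t=0$ is $\hat p_0^2 + 2\alpha\hat x_0^2$. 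This is the natural route since it mirrors the structure of Remark \ref{Nota2}. I expect the main (and only real) obstacle to be bookkeeping the sign conventions: the paper writes $V(x) = \alpha x^2$ so $V'(x) = 2\alpha x$ and $\frac{d\langle\hat p\rangle}{dt} = -2\alpha\langle\hat x\rangle$, while (\ref{Eq19}) carries the coefficient $8\alpha$ — one must be careful that the $-8\alpha\I$ from (\ref{Eq19}) and the $+4\alpha\langle\hat x\rangle^2$ from differentiating the square combine correctly into $-8\alpha\V$, which they do since $\I = \V + \langle\hat x\rangle^2$ gives $-8\alpha\I + 4\alpha\langle\hat x\rangle^2 = -8\alpha\V - 4\alpha\langle\hat x\rangle^2$, and the leftover $-4\alpha\langle\hat x\rangle^2$ joins $-2\langle\hat p\rangle^2$ to form the conserved combination absorbed into $C_{\V}$.
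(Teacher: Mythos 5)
Your proposal is correct and follows essentially the same route as the paper: subtract $\left[\langle \hat x\rangle^t\right]^2$ from the virial identity (\ref{Eq19}), use the Ehrenfest equations $\frac{d\langle\hat x\rangle}{dt}=\langle\hat p\rangle$ and $\frac{d\langle\hat p\rangle}{dt}=-2\alpha\langle\hat x\rangle$, and identify the leftover term as the conserved classical quantity $-2\langle\hat p\rangle^2-4\alpha\langle\hat x\rangle^2$ evaluated at $t=0$. Your ``alternative, cleaner'' argument (differentiating $\langle\hat p\rangle^2+2\alpha\langle\hat x\rangle^2$ and showing the derivative vanishes) is in fact exactly the paper's proof, so no substantive difference remains.
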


\begin {proof}
Indeed, form (\ref {Eq19}) it turns out that the variance is a solution to the equation
\be
\frac {d^2 \V}{dt^2} + 8 \alpha \V = - \frac {d^2 \langle \hat x \rangle^2}{dt^2} - 8 \alpha \langle \hat x \rangle^2 +   C_{\I}+ 
 {2\nu} \frac {\mu -2}{\mu+1} \| \psi \|_{L^{2\mu +2}}^{2\mu +2}\, , 
\ee
where $\langle \hat x \rangle$ simply denotes $\langle \hat x \rangle^t$ and it is given by (\ref {Eq17}) (resp. (\ref {Eq18})) when $\alpha >0$ (resp. $\alpha <0$). \ We 
may remark that the term
\be
C = - \frac {d^2 \langle \hat x \rangle^2}{dt^2} - 8 \alpha \langle \hat x \rangle^2 
\ee
is constant. \ Indeed,
\be
C = -2 \left ( \frac {d \langle \hat x \rangle}{dt} \right )^2 - 2 \langle x \rangle \frac {d^2 \langle \hat x \rangle}{dt^2} - 8 \alpha \langle x \rangle^2 
= -2 \left ( \frac {d \langle \hat x \rangle}{dt} \right )^2 - 4 \alpha \langle x \rangle^2 
\ee
since $\frac {d^2 \langle \hat x \rangle}{dt^2} = - {2\alpha} \langle \hat x \rangle$ and thus
\be
\frac {dC}{dt} = -4 \frac {d \langle \hat x \rangle}{dt} \frac {d^2 \langle \hat x \rangle}{dt^2}  - {8} \alpha  \langle \hat x \rangle 
\frac {d \langle \hat x \rangle}{dt}   =0 \, .
\ee
Hence,
\be
C = - 2 \left ( \frac {d\langle \hat x \rangle }{dt} \right )^2_{t=0} - {4}\alpha \hat x_0^2 = - 2 {\hat p_0^2} - {4} \alpha \hat x_0^2 \,  
\ee
and (\ref {Eq20}) follows. 
\end {proof}

We recall that the initial condition associated to (\ref {Eq19}) and (\ref {Eq20}) are 
\bee
\V_0 := \V (0) =  \I_0 - \hat x_0^2 =  \| \hat x \psi_0 \|^2 - \hat x_0^2 \label {Eq22}
\eee
and 
\bee
\dot {\V}_0  := \frac {d\V (0)}{dt} = {2} \left [ \Re \langle \hat x \psi_0 , \hat p \psi_0 \rangle  - \hat x_0 \hat p_0 \right ] \, . \label {Eq23}
\eee

Let us consider now the differential equation (\ref {Eq20}) for $\mu \ge 2$ and $\nu <0$. \ From Lemma \ref {Lemma3} in Appendix \ref {AppB} we have 
that $0 \le \V (t) \le \z (t)$ where $\z (t)$ is the solution to 
\be
\left \{
\begin {array}{l}
\frac {d^2 \z}{dt^2} + {8\alpha}  \z = C_{\V}  \\
\z (0)= \V_0 \ \mbox { and } \ \frac {d \z (0)}{dt}=\dot \V_0 
\end {array}
\right. \, . 
\ee
If we set $\Omega = 2\lambda = \sqrt {8|\alpha |}$ then the solution $\z (t)$ is given by 
\be
\z(t) =
\left \{
\begin {array}{ll}
\z_H (t) := \frac {\dot \V_0}{\Omega} \sin (\Omega t) + \V_0 \cos (\Omega t) + \frac {1}{\Omega^2} C_\V \left [ 1 - \cos (\Omega t) \right ] & \, , 
\ \mbox { if } \ \alpha >0 \\
 & \\
\z_I (t) := \frac {\dot \V_0}{\Omega} \sinh (\Omega t) + \V_0 \cosh (\Omega t) -\frac {1}{\Omega^2} C_\V \left [ 1 - \cosh (\Omega t) \right ] & \, , 
\ \mbox { if } \ \alpha <0
\end {array}
\right. \,  .
\ee

Now, we are ready to apply the enhanced Zakharov-Glassey method.

\subsection {Harmonic oscillator - Criterion for blow-up} \label {Sez5_1} In the case of the harmonic oscillator potential, where $\alpha >0$, from Lemma \ref {Lemma3} 
in Appendix \ref {AppB} it follows that the variance $\V (t)$ is bounded from above by the function 
\be
\z_H(t) = \sqrt {a^2+b^2} \sin (\Omega t + \varphi ) + c 
\ee
for any $t$ such that $\Omega |t| \le \pi$, where $\varphi $ is a phase term such that
\be
\frac {a}{\sqrt {a^2+b^2}} = \cos \varphi \, , \ \frac {b}{\sqrt {a^2+b^2}} = \sin \varphi \, , \ a:= \frac {\dot \V_0}{\Omega} \, , \ b := \V_0 - 
\frac {C_\V}{\Omega^2} \, , \ c := \frac {C_\V}{\Omega^2} \, . 
\ee
Since $\z_H (\pm \pi/\Omega )= \frac {2}{\Omega^2}C_\V - \V_0$ then we have blow-up in the future and in the past if 
\bee
2\frac {C_\V}{\Omega^2} - \V_0 \le 0 \, . \label {Eq24Bis}
\eee
If not, since by means of a straightforward calculation it follows that 
\be
\min_{t \in [-\pi /\Omega , +\pi /\Omega ]} \V (t ) \le \min_{t \in [-\pi /\Omega , +\pi /\Omega ]} \z_H (t ) = \frac {C_{\V}}{\Omega^2} - 
\sqrt {\frac {\dot \V_0^2}{\Omega^2} + \left ( \V_0 - \frac {C_{\V}}{\Omega^2}\right )^2}
\ee
and then there exists blow-up in the past or in the future if
\bee
\dot \V_0^2 + \V_0^2 \Omega^2 - 2 \V_0 C_{\V} \ge 0 \, . \label {Eq24}
\eee

Thus, we have proved the following results.

\begin {theorem} \label {Teo3}
Let $\psi_0 \in \Sigma $ be the normalized initial wavefunction; let $\mu \ge 2$, $\alpha >0$ and $\Omega = \sqrt {8\alpha }$; let $C_\V$, $\V_0$ and $\dot \V_0$ 
defined as in (\ref {Eq21}), (\ref {Eq22}) and (\ref {Eq23}). \ Then, in the focusing nonlinearity case such that $\nu <0$  blow-up occurs in the 
past \emph { and } in the future at some instants $\tilde T_- \le t_-^\star <0 < t_+^\star \le \tilde T_+$ if (\ref {Eq24Bis}) is satisfied; where $\tilde T_\pm$ 
are the solutions to the equation $\z_H (t)=0$ in the interval $[-\pi /\Omega , \pi /\Omega ]$. \ If (\ref {Eq24Bis}) is not satisfied, but (\ref {Eq24}) holds true 
then blow-up occurs in the past \emph { or } in the future in the interval $[-\pi/\Omega , +\pi /\Omega ]$. 
\end {theorem}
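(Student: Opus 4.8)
The plan is to read off the blow-up from the upper bound $\V(t)\le\z_H(t)$ supplied by the comparison Lemma~\ref{Lemma3}, together with the nonnegativity of $\V$ and the functional inequality (\ref{Eq29}). First I would record the two structural facts on which everything rests: (i) $\V(t)=\|(\hat x-\langle\hat x\rangle^t)\psi_t\|_{L^2}^2\ge 0$ on the whole maximal interval of existence; and (ii) since $\nu<0$ and $\mu\ge 2$, the nonlinear term $2\nu\frac{\mu-2}{\mu+1}\|\psi_t\|_{L^{2\mu+2}}^{2\mu+2}$ in the variance identity (\ref{Eq20}) is $\le 0$, so $\V$ is a subsolution of $\z''+8\alpha\z=C_\V$ with data $\V_0,\dot\V_0$. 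Feeding this into Lemma~\ref{Lemma3} (with $\alpha>0$) gives $0\le\V(t)\le\z_H(t)$ for every $t$ in the interval of existence that also satisfies $\Omega|t|\le\pi$; the length-$\pi/\Omega$ restriction is exactly where the oscillator Green kernel $\Omega^{-1}\sin(\Omega(t-s))$ keeps one sign, and it is why the interval $[-\pi/\Omega,\pi/\Omega]$ persists in the statement. I would then use the amplitude--phase form $\z_H(t)=\sqrt{a^2+b^2}\,\sin(\Omega t+\varphi)+c$ with $a=\dot\V_0/\Omega$, $b=\V_0-C_\V/\Omega^2$, $c=C_\V/\Omega^2$, so that the sign of $\z_H$ on this interval is transparent.

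For the first assertion I assume (\ref{Eq24Bis}), i.e.\ $\z_H(\pm\pi/\Omega)=\tfrac{2C_\V}{\Omega^2}-\V_0\le 0$. Since $\z_H(0)=\V_0\ge 0$, the intermediate value theorem yields a first zero $\tilde T_+\in(0,\pi/\Omega]$ and a last zero $\tilde T_-\in[-\pi/\Omega,0)$ of $\z_H$, with $\z_H>0$ on $[0,\tilde T_+)$ and on $(\tilde T_-,0]$. Then I argue by dichotomy: if $\psi_t$ exists up to $t=\tilde T_+$, then $0\le\V(\tilde T_+)\le\z_H(\tilde T_+)=0$ forces $\V(\tilde T_+)=0$, and (\ref{Eq29}) gives blow-up at some $t_+^\star\in(0,\tilde T_+]$; if instead the solution has ceased to exist by $\tilde T_+$, then $t_+^\star\le\tilde T_+$ already. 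In either case $0<t_+^\star\le\tilde T_+$, and the mirror argument on $[\tilde T_-,0]$ gives $\tilde T_-\le t_-^\star<0$.

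For the second assertion I assume (\ref{Eq24Bis}) fails (so $\z_H(\pm\pi/\Omega)>0$) and (\ref{Eq24}) holds. A one-line computation shows (\ref{Eq24}) is equivalent to $c-\sqrt{a^2+b^2}\le 0$, i.e.\ to $\min_{\R}\z_H\le 0$ — the squaring is legitimate because when $C_\V\le 0$ both (\ref{Eq24}) and $c\le\sqrt{a^2+b^2}$ hold automatically, while when $C_\V>0$ both sides are nonnegative. As $t$ runs over $[-\pi/\Omega,\pi/\Omega]$ the phase $\Omega t+\varphi$ sweeps an interval of length $2\pi$, so this minimum is attained at some $t_0$ in that interval, and in fact $t_0\in(-\pi/\Omega,\pi/\Omega)$ since the endpoints carry $\z_H>0$. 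Taking (WLOG) $t_0\ge 0$ and running the same dichotomy: either $\psi_t$ reaches $t_0$, in which case $0\le\V(t_0)\le\z_H(t_0)\le 0$ gives $\V(t_0)=0$ and (\ref{Eq29}) yields blow-up in $(0,t_0]$, or $t_+^\star\le t_0<\pi/\Omega$; for $t_0<0$ the conclusion is symmetric. Either way blow-up occurs at an instant in $[-\pi/\Omega,\pi/\Omega]$.

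The two identities that are genuinely needed — the value $\z_H(\pm\pi/\Omega)=\tfrac{2C_\V}{\Omega^2}-\V_0$ and the minimum $\min\z_H=\tfrac{C_\V}{\Omega^2}-\sqrt{\tfrac{\dot\V_0^2}{\Omega^2}+(\V_0-\tfrac{C_\V}{\Omega^2})^2}$ — are elementary and already recorded before the statement, so the only real care is organizational: the bound $\V\le\z_H$ is valid only for $\Omega|t|\le\pi$, and one must everywhere separate the case where the solution survives long enough to reach the instant at which $\z_H$ vanishes (where $\V=0$ triggers blow-up via (\ref{Eq29})) from the case where it has blown up earlier, so as to place $t_\pm^\star$ inside the asserted intervals. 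I expect this dichotomy, together with the checks $\V_0\ge 0$ and interiority of the minimiser $t_0$, to be the main — and rather mild — obstacle.
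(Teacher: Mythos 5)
Your proposal is correct and follows essentially the same route as the paper: the comparison Lemma~\ref{Lemma3} gives $0\le\V(t)\le\z_H(t)$ on $[-\pi/\Omega,\pi/\Omega]$, the endpoint value $\z_H(\pm\pi/\Omega)=2C_\V/\Omega^2-\V_0$ handles condition (\ref{Eq24Bis}), the minimum $C_\V/\Omega^2-\sqrt{\dot\V_0^2/\Omega^2+(\V_0-C_\V/\Omega^2)^2}$ handles condition (\ref{Eq24}), and (\ref{Eq29}) converts $\V\to 0$ into blow-up. Your added care about the sign analysis when squaring the inequality and about the dichotomy between the solution reaching the zero of $\z_H$ versus blowing up earlier only makes explicit what the paper leaves implicit.
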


\begin {remark} \label {Nota8}
We compare now the results above with the ones  given by Proposition 3.2 \cite {Car1} where occurrence of blow-up in the past and in the future was proved in 
the case of harmonic potential where $\alpha >0$, focusing nonlinearity where $\nu <0$, and under the conditions $\mu \ge 2$ and 
\bee
\frac {1}{2} \| \nabla \psi_0 \|^2_{L^2} + \frac {\nu}{\mu +1} \| \psi_0 \|^{2\mu +2}_{L^{2\mu +2}} \le 0 \, . \label {Eq25}
\eee
In fact, condition (\ref {Eq25}) implies that (since $\V_0 \ge 0$)
\be
\Omega^2 \I_0 \ge {8} {\mathcal E} 
& \Leftrightarrow &  \Omega^2 \V_0 - 2 C_{\V}  \ge  {4} \hat p_0^2  \, . 
\ee
That is, if (\ref {Eq25}) occurs then (\ref {Eq24Bis}) is satisfied (but not vice versa). 
\end {remark}

\subsection {Inverted oscillator - Criterion for blow-up} \label {Sez5_2} In the case of the inverted oscillator potential where $\alpha <0$ a similar argument proves 
that the variance $\V (t)$ is bounded from above by the function $\z_I (t)$ for any $t\in \R$. \ Then, it follows that

\begin {theorem}\label {Teo4}
Let $\psi_0 \in \Sigma $ be the normalized initial wavefunction; let $\mu \ge 2$, $\alpha <0$ and $\Omega = \sqrt {8|\alpha |}$; let $C_\V$, $\V_0$ and $\dot \V_0$ 
defined as in (\ref {Eq21}), (\ref {Eq22}) and (\ref {Eq23}). \ Let it now
\be
a:= \frac {\dot \V_0}{\Omega}\, , \ b:=  \V_0 + \frac {C_{\V}}{\Omega^2} \ \mbox { and } \ c := - \frac {C_{\V}}{\Omega^2} \, . 
\ee
Then, in the focusing nonlinearity case such that $\nu <0$ blow-up occurs if

\begin {itemize}

\item [i.] $b<-|a|$; in that case blow-up occurs in \emph {both the past and in the future}.

\item [ii.] $|a|<b$ and $\sqrt {b^2-a^2} + c \le 0$; in that case blow-up occurs \emph {only in the future} (if $a<0$) or \emph {only in the past} (if $a>0$).

\item [iii.] $|a| >|b|$; in that case we have blow-up in \emph { the past } if $a>0$ \emph { or in the future } if $a<0$.

\item [iv.] $|a|=|b|$; in that case we have blow-up if $bc<0$, in particular we have blow-up \emph {in the past} if $a>0$ or \emph {in the future} if $a<0$.

\end {itemize}

\end {theorem}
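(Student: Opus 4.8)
The plan is to rerun, almost verbatim, the argument of Section \ref{Sez5_1}, replacing the trigonometric comparison function $\z_H$ by the hyperbolic one $\z_I$. Since $\mu\ge 2$ and $\nu<0$, the nonlinear term $2\nu\frac{\mu-2}{\mu+1}\|\psi\|_{L^{2\mu+2}}^{2\mu+2}$ on the right of the virial equation (\ref{Eq20}) for $\V$ is $\le 0$, so Lemma \ref{Lemma3} of Appendix \ref{AppB} gives $0\le\V(t)\le\z_I(t)$ throughout the maximal interval of existence, where $\z_I$ solves $\z''-\Omega^2\z=C_\V$ with $\z_I(0)=\V_0$, $\dot\z_I(0)=\dot\V_0$ and $\Omega=\sqrt{8|\alpha|}$. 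Writing $\z_I(t)=a\sinh(\Omega t)+b\cosh(\Omega t)+c$ with the $a,b,c$ of the statement, one has $\z_I(0)=b+c=\V_0>0$ (strictly, because $\V_0=\|(\hat x-\hat x_0)\psi_0\|^2$ and $\|\psi_0\|=1$). The reduction is then: if $\z_I(T)\le 0$ for some $T>0$ (resp. $T<0$), then, $\z_I$ being continuous with $\z_I(0)>0$, the variance $\V$ must vanish somewhere in $(0,T]$ (resp. $[T,0)$), and by (\ref{Eq29}) this forces blow-up in the future (resp. in the past). So everything reduces to locating the zeros of the explicit curve $t\mapsto\z_I(t)$ and deciding on which side of the origin they lie.

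To this end I would use $a\sinh u+b\cosh u=\frac{b+a}{2}e^{u}+\frac{b-a}{2}e^{-u}$ with $u=\Omega t$, so that the qualitative shape of $g(u):=a\sinh u+b\cosh u$ is dictated by the signs of $b+a$ and $b-a$, equivalently by the sign of $b^2-a^2$ together with the sign of $b$. When $b^2>a^2$, $g$ has a unique critical point $u^\star=\frac12\log\frac{b-a}{b+a}$, which is a strict global minimum equal to $+\sqrt{b^2-a^2}$ if $b>|a|$ (and then $u^\star$ has the sign of $-a$, since $b+a>0$), and a strict global maximum equal to $-\sqrt{b^2-a^2}$ if $b<-|a|$; in both subcases $g\to\pm\infty$ at $\pm\infty$ accordingly. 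When $b^2\le a^2$, $g$ is strictly monotone, and in the boundary case $|b|=|a|$ it degenerates to a pure exponential, $g(u)=ae^{u}$ if $b=a$ and $g(u)=-ae^{-u}$ if $b=-a$.

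With these facts the four cases are elementary. In case i, $b<-|a|$, so $\z_I=g+c\to-\infty$ as $t\to\pm\infty$ while $\z_I(0)=\V_0>0$; hence $\z_I$ has a zero on each side of the origin and blow-up occurs both in the past and in the future. In case ii, $b>|a|$, $\z_I$ attains its global minimum $c+\sqrt{b^2-a^2}\le 0$ at $t=u^\star/\Omega$, whose sign is that of $-a$; since $\z_I(0)=\V_0>0$ this produces a zero of $\z_I$ strictly between $0$ and $u^\star/\Omega$ and none on the opposite side, i.e. blow-up only in the future if $a<0$ and only in the past if $a>0$. In case iii, $|a|>|b|$, the dominant exponential makes $\z_I\to-\infty$ as $t\to+\infty$ when $a<0$ and as $t\to-\infty$ when $a>0$, which together with $\z_I(0)>0$ yields a zero, hence blow-up, on the corresponding side. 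In case iv, $|a|=|b|$, one has $\z_I(t)=ae^{\Omega t}+c$ or $\z_I(t)=-ae^{-\Omega t}+c$; combining $\z_I(0)=\V_0>0$ with the explicit form shows in either situation that $\z_I$ has a zero precisely when $bc<0$, and that this zero lies at a negative time if $a>0$ and at a positive time if $a<0$. Collecting the four cases gives the theorem.

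The main obstacle I anticipate is purely organizational: keeping the sign conventions straight so that the location of $u^\star$ — and hence the past/future alternative in cases ii–iv — comes out correctly, and handling the degenerate boundary case $|a|=|b|$, where the critical-point analysis collapses and must be replaced by the direct exponential computation. The analytic input (the comparison Lemma \ref{Lemma3} and the functional inequality (\ref{Eq29})) is already available, so beyond this sign bookkeeping the proof is routine one-variable calculus, exactly parallel to the harmonic case of Section \ref{Sez5_1}.
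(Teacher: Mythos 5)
Your proposal is correct and follows essentially the same route as the paper: the comparison Lemma \ref{Lemma3} bounds $\V$ above by the explicit hyperbolic solution $\z_I$, the functional inequality (\ref{Eq29}) converts a zero of $\V$ into blow-up, and the four cases are settled by elementary analysis of the zeros of $a\sinh(\Omega t)+b\cosh(\Omega t)+c$ (the paper locates the critical point via $\tau_1=\mbox{arctanh}(-a/b)$ where you use the decomposition $\frac{b+a}{2}e^{u}+\frac{b-a}{2}e^{-u}$, a purely cosmetic difference). The only imprecision is the phrase ``precisely when $bc<0$'' in case iv --- in the subcases $a=b<0$ and $a=-b>0$ the zero of $\z_I$ exists regardless of the sign of $c$ --- but since the theorem only asserts a sufficient condition this is harmless, and the paper's own proof imposes the same redundant restriction.
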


\begin {proof}
Let us introduce the function $\z (\tau ) = \z_I ( t)$ where $\tau =\Omega t$, then 
\be
\z (\tau ) :=a \sinh ( \tau )+b \cosh (\tau ) +c \, , \  \z (0) = \V (0) >0 \, , 
\ee
and where $a$, $b$ and $c$ are defined above. \ If

\begin {itemize}

\item [1)] $|a| < |b|$ then $\frac {d\z (\tau_1)}{d \tau} =0$ where $\tau_1 = \mbox {arctanh} \left (-\frac {a}{b} \right ) $. \ In particular, if:

\begin {itemize}

\item [1a)] $b< 0$ then $\lim_{\tau \to \pm \infty} \z (\tau ) = - \infty$ and thus there exists $\T_- < 0 < \T_+$ such that $\z ( \T_\pm )=0$. \ In such a case 
we have blow-up in the past \emph { and } in the future.

\item [1b)] $0  < b$ then $\lim_{\tau \to \pm \infty} \z (\tau ) = + \infty$. \ We compute now 
\be
\z (\tau_1 )= \sqrt {b^2-a^2} + c \, .
\ee
Thus, if 
\be
\sqrt {b^2-a^2} + c \le 0 
\ee
then we have blow-up in the future if $a<0$ \emph { or } in the past if $a>0$. 

\end {itemize}

\item [2)] $|a| > |b|$ then $\z (\tau )$ is a monotone increasing (resp. decreasing) function if $a>0$ (resp. $a<0$) such that $\lim_{\tau \to \pm \infty} \z (\tau ) 
= \pm \infty $ (resp. $\mp \infty$); therefore there exists $\T_- < 0 $ (resp. $0<\T_+$) such that $\z ( \T_- )=0$ (resp. $\z (\T_+ )=0$), and thus we have blow-up 
in the past (resp. in the future).

\item [3)] $a=b$ then $\frac {d\z (\tau )}{d\tau } \not= 0$ for any $\tau$. \ Hence, if:

\begin {itemize}

\item [3a)] $a>0$ then $\frac {d\z (0 )}{d\tau } >0$ and then $\frac {d\z (\tau )}{d\tau } >0$ for any $\tau$; furthermore, $\lim_{\tau \to - \infty} \z (\tau )= c$ 
and $\lim_{\tau \to + \infty} \z (\tau ) = +\infty$. \ Thus, if $c <0$ then there exists $\T_- < 0 $ such that $\z ( \T_- )=0$ and so we have blow-up in the past.

\item [3b)] $a<0$ then $\frac {d\z (0 )}{d\tau } <0$ and then $\frac {d\z (\tau )}{d\tau } <0$ for any $\tau$; furthermore, $\lim_{\tau \to - \infty} \z (\tau )= c$ 
and $\lim_{\tau \to + \infty} \z (\tau ) = - \infty$. \ Thus, if $c > 0$ then there exixts $0 < \T_+ $ such that $\z ( \T_+ )=0$ and so we have blow-up in the future.

\end {itemize}

\item [4)] $a=-b$ then $\frac {d\z (\tau )}{d\tau } \not= 0$ for any $\tau$. \ Hence, if:

\begin {itemize}

\item [4a)] $a>0$ then $\frac {d\z (0 )}{d\tau } >0$ and then $\frac {d\z (\tau )}{d\tau } >0$ for any $\tau$; furthermore, $\lim_{\tau \to - \infty} 
\z (\tau )= -\infty$ and $\lim_{\tau \to + \infty} \z (\tau ) = c$. \ Thus, if $c >0$ then there exists $\T_- < 0 $ such that $\z ( \T_- )=0$ and so we have 
blow-up in the past.

\item [4b)] $a<0$ then $\frac {d\z (0 )}{d\tau } <0$ and then $\frac {d\z (\tau )}{d\tau } <0$ for any $\tau$; furthermore, $\lim_{\tau \to - \infty} 
\z (\tau )= +\infty $ and $\lim_{\tau \to + \infty} \z (\tau ) = c$. \ Thus, if $c <0$ then there exists $\T_- < 0 $ such that $\z ( \T_- )=0$ and so we have 
blow-up in the past.

\end {itemize}

\end {itemize}

Collecting all these results then 
Theorem \ref {Teo4} follows.
\end {proof}

\begin {remark} \label {Nota9}
We compare now the results above with those given by Theorem 1.1 \cite {Car3}. \ For example, \cite {Car3} proved that in the case of inverted potential, where $\alpha <0$, and focusing nonlinearity, where $\nu <0$, under the condition $\mu \ge 2$ and
\bee
\frac {1}{2} \| \nabla \psi_0 \|^2_{L^2} + \frac {\nu}{\mu +1} \| \psi_0 \|^{2\mu +2}_{L^{2\mu +2}} < -|\alpha | \| x \psi_0 \|^2_{L^2} - \sqrt {2|\alpha|} \left | \Re \langle \hat x \psi_0 , \hat p \psi_0 \rangle \right | \label {Eq26}
\eee
then blow-up occurs in the future \emph {and} in the past at some instant. \ By means of a straightforward calculation it can be proved that if condition (\ref {Eq26}) is satisfied, then condition i. of Theorem \ref {Teo4} holds true, but not vice versa.
\end {remark}

\appendix

\section {Functional inequalities} \label {AppA}

\begin {lemma} \label {Lemma2}
The following inequality holds true: let $y \in \R$ and let 
\be
\Gamma := \Gamma (y) = \langle f , (x-y)^2 f \rangle_{L^2} 
\ee
for any test function $f\in L^2 (\R , dx) $ such that $xf \in L^2 (\R , dx)$. \ Then, for any $q\ge 0$:
\bee
\| f \|^{2q+2}_{L^{2q+2}} &\le & C \sqrt {\Gamma} \| f \|_{L^2}^q \| f' \|_{L^2}^{q+1} \, , 
\label {Eq27}
\eee
for some positive constant $C$, where $f' = \frac {df}{dx}$.
\end {lemma}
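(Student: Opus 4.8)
The plan is to establish the inequality by a variant of the Gagliardo–Nirenberg argument, exploiting the translation invariance of the $L^p$ norms to move the reference point of the moment $\Gamma$ to the origin. First I would observe that both sides of \eqref{Eq27} are invariant under the substitution $f(x)\mapsto f(x+y)$: the left-hand side $\|f\|_{L^{2q+2}}$, the factors $\|f\|_{L^2}$ and $\|f'\|_{L^2}$ are all translation-invariant, and $\Gamma(y)=\langle f(\cdot+y),x^2 f(\cdot+y)\rangle$ becomes $\langle g,x^2 g\rangle$ with $g(x)=f(x+y)$. Hence it suffices to prove the case $y=0$, i.e.
\[
\|f\|_{L^{2q+2}}^{2q+2}\le C\,\|xf\|_{L^2}\,\|f\|_{L^2}^{q}\,\|f'\|_{L^2}^{q+1}.
\]

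Next I would prove this reduced inequality by the classical device of differentiating $|f|^{2q+2}$ and integrating. Writing $\rho=|f|^2$, one has (formally, and then by a density argument for $f\in H^1\cap\mathcal D(\hat x)$)
\[
|f(x)|^{2q+2}=\int_{-\infty}^{x}\frac{d}{ds}|f(s)|^{2q+2}\,ds=(q+1)\int_{-\infty}^{x}|f(s)|^{2q}\bigl(\bar f f'+f\bar f'\bigr)(s)\,ds,
\]
so that $\|f\|_{L^\infty}^{2q+2}\le 2(q+1)\int_\R |f|^{2q+1}|f'|\,dx\le 2(q+1)\|f\|_{L^\infty}^{2q}\|f\|_{L^2}\|f'\|_{L^2}$ by Cauchy–Schwarz, which yields the endpoint bound $\|f\|_{L^\infty}^{2}\le 2(q+1)\|f\|_{L^2}\|f'\|_{L^2}$. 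Then
\[
\|f\|_{L^{2q+2}}^{2q+2}=\int_\R |f|^{2q}\,|f|^{2}\,dx\le \|f\|_{L^\infty}^{2q}\int_\R |f|^2\,dx
\]
would give a bound in terms of $\|f\|_{L^2}^{q+1}\|f'\|_{L^2}^{q}\cdot\|f\|_{L^2}$, which is not quite the claimed right-hand side: it is missing the factor $\sqrt\Gamma$ and has the wrong balance of $\|f\|_{L^2}$ versus $\|f'\|_{L^2}$. The extra input needed is the uncertainty-type estimate $\|f\|_{L^2}^2=\langle f,f\rangle$ controlled by $\|xf\|_{L^2}\|f'\|_{L^2}$: integrating by parts, $\|f\|_{L^2}^2=-\langle f,x\,(f)'\rangle-\langle f, x f'\rangle$ is not literally true, so instead I would use $\|f\|_{L^2}^2 = \mathrm{Re}\langle f,f\rangle$ and the identity $\int |f|^2\,dx = -\int x\,\frac{d}{dx}|f|^2\,dx = -2\,\mathrm{Re}\int x\bar f f'\,dx \le 2\|xf\|_{L^2}\|f'\|_{L^2}$, i.e. $\|f\|_{L^2}^2\le 2\sqrt{\Gamma}\,\|f'\|_{L^2}$.

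Finally I would assemble the pieces: combining $\|f\|_{L^{2q+2}}^{2q+2}\le \|f\|_{L^\infty}^{2q}\|f\|_{L^2}^2$, the endpoint bound $\|f\|_{L^\infty}^{2q}\le\bigl(2(q+1)\bigr)^{q}\|f\|_{L^2}^{q}\|f'\|_{L^2}^{q}$, and the uncertainty bound $\|f\|_{L^2}^2\le 2\sqrt{\Gamma}\,\|f'\|_{L^2}$, I obtain
\[
\|f\|_{L^{2q+2}}^{2q+2}\le\bigl(2(q+1)\bigr)^{q}\cdot 2\,\sqrt{\Gamma}\,\|f\|_{L^2}^{q}\,\|f'\|_{L^2}^{q+1},
\]
which is \eqref{Eq27} with $C=2\bigl(2(q+1)\bigr)^{q}$ after undoing the translation. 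The main obstacle I anticipate is not the analysis but getting the bookkeeping of exponents exactly right — one must be careful to split $\int|f|^{2q+2}$ as $\|f\|_{L^\infty}^{2q}\cdot\|f\|_{L^2}^2$ (rather than absorbing an extra power of $\|f\|_{L^\infty}$) precisely so that the single remaining factor $\|f\|_{L^2}^2$ can be traded for $\sqrt\Gamma\,\|f'\|_{L^2}$, producing the advertised powers $\|f\|_{L^2}^q\|f'\|_{L^2}^{q+1}$. A secondary technical point is justifying the boundary-term vanishing and the integration by parts for general $f\in L^2$ with $xf\in L^2$ and $f'\in L^2$, which is handled by approximating with Schwartz functions and passing to the limit, all three norms on the right being continuous under $H^1\cap\mathcal D(\hat x)$ convergence.
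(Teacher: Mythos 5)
Your proof is correct, but it reaches (\ref{Eq27}) by a genuinely different route than the paper. The paper's proof works for arbitrary $y$ in one stroke: it writes $\| f \|^{2q+2}_{L^{2q+2}} = \int_{\R} \frac{\partial (x-y)}{\partial x}\,|f|^{2q+2}\,dx$, integrates by parts once, and then applies Cauchy--Schwarz so that the factor $\|(x-y)f\|_{L^2}=\sqrt{\Gamma}$ appears immediately alongside $\|f\|_{L^\infty}^{2q}\|f'\|_{L^2}$; the only remaining ingredient is the Gagliardo--Nirenberg endpoint bound $\|f\|_{L^\infty}\le \sqrt{2}\,\|f'\|_{L^2}^{1/2}\|f\|_{L^2}^{1/2}$. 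You instead first reduce to $y=0$ by translation invariance (correct, and the paper never needs this step), then split $\|f\|_{L^{2q+2}}^{2q+2}\le \|f\|_{L^\infty}^{2q}\|f\|_{L^2}^{2}$ and trade the leftover $\|f\|_{L^2}^{2}$ for $2\sqrt{\Gamma}\,\|f'\|_{L^2}$ via the one-dimensional Heisenberg uncertainty inequality $\|f\|_{L^2}^2\le 2\|xf\|_{L^2}\|f'\|_{L^2}$. Your decomposition is more modular and makes transparent exactly where the moment $\sqrt{\Gamma}$ enters (through the uncertainty principle; indeed at $q=0$ your chain \emph{is} the uncertainty inequality), at the cost of an extra reduction step and a slightly different constant; the paper's single integration by parts is shorter and handles general $y$ without translating. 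Your closing remarks on density and the exponent bookkeeping are apt, and the exponents in your final assembly do come out right.
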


\begin {proof}
Indeed:
\be
\| f \|^{2q+2}_{L^{2q+2}} 
&=& \int_{\R} \frac {\partial (x-y)}{\partial x} f^{q+1}\bar f^{q+1} dx \\ 
&=& -  (q+1) \int_{\R}(x-y) |f|^{2q} \left [f' \bar f+ f \bar f' \right ] dx \, . 
\ee
Hence
\be
\| f \|^{2q+2}_{L^{2q+2}}  \le 2(q+1)  \| (x-y )f \|_{L^2} \| f\|_{L^\infty}^{2q} \| f' \|_{L^2}\, .
\ee
Now, recalling that from the Gagliardo-Nirenberg inequality one has that 
\be
\| f \|_{L^\infty} \le \sqrt {2} \| f' \|^{1/2}_{L^2}
\|  f \|^{1/2}_{L^2}
\ee
then it follows that 
\be
\| f \|^{2q +2}_{L^{2q +2}}  \le  C \sqrt {\Gamma }
\|  f \|_{L^2}^{q} \| f' \|_{L^2}^{q+1}
\, ,
\ee
for some positive constant $C$.
\end {proof}

\begin {corollary} \label {Coro3} In particular, if $y=0$ and $q=0$ then we have that for some positive constant $C$:
\bee
\| f \|^{2}_{L^{2}} \le C \sqrt {\I}
 \| f' \|_{L^2} \, , 
\label {Eq28}
\eee
where $ \Gamma (0) = \| x f \|_{L^2}^2 = \I $ is the moment of inertia; if $y = \langle \hat x \rangle = \langle f , x f \rangle_{L^2} $ and $q=0$ then we have that:
\bee
\| f \|^{2}_{L^{2}} \le C  \sqrt {\V}
\| f' \|_{L^2} \, , 
\label {Eq29}
\eee
where $ \Gamma (\langle \hat x \rangle ) = \| (x - \langle \hat x \rangle ) f \|_{L^2}^2 = \V$ is the variance.
\end {corollary}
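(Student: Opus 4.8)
The plan is to obtain Corollary \ref{Coro3} as a direct specialization of Lemma \ref{Lemma2}, with no new ingredients. First I would set $q=0$ in the inequality (\ref{Eq27}). Since $\| f \|_{L^2}^0 = 1$, the right-hand side collapses and the lemma reads $\| f \|_{L^2}^2 \le C \sqrt{\Gamma(y)}\, \| f' \|_{L^2}$ for every real parameter $y$ for which $\Gamma(y)$ is defined, and for every $f \in L^2(\R,dx)$ with $x f \in L^2(\R,dx)$.

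Second, for the bound (\ref{Eq28}) I would take $y = 0$. Then by definition $\Gamma(0) = \langle f, (x-0)^2 f\rangle_{L^2} = \langle f, x^2 f \rangle_{L^2} = \| x f \|_{L^2}^2$, which is exactly the moment of inertia $\I$; substituting into the specialized inequality gives (\ref{Eq28}).

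Third, for the bound (\ref{Eq29}) I would instead take $y = \langle \hat x \rangle = \langle f, x f \rangle_{L^2}$. With this choice $\Gamma(\langle \hat x \rangle) = \langle f, (x - \langle \hat x \rangle)^2 f \rangle_{L^2} = \| (x - \langle \hat x \rangle) f \|_{L^2}^2 = \V$, the variance, and again substitution into the specialized inequality yields (\ref{Eq29}).

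The only step that calls for even a moment's attention — and thus the closest thing here to a ``main obstacle'' — is checking that the choice $y = \langle \hat x \rangle$ is legitimate, i.e. that $\langle \hat x \rangle$ is a well-defined real number under the standing hypothesis $f, x f \in L^2(\R,dx)$. This is immediate from the Cauchy--Schwarz inequality, $|\langle f, x f \rangle_{L^2}| \le \| f \|_{L^2} \| x f \|_{L^2} < \infty$, so that $\Gamma(\langle \hat x \rangle)$ is finite and the lemma applies verbatim. Everything else is a matter of unwinding definitions.
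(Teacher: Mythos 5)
Your proposal is correct and matches the paper's intent exactly: the corollary is obtained purely by specializing Lemma \ref{Lemma2} to $q=0$ with the two choices $y=0$ and $y=\langle \hat x \rangle$, which is all the paper does (it offers no separate proof). Your added remark that $\langle \hat x \rangle$ is finite by Cauchy--Schwarz is a harmless and welcome bit of extra care.
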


\section {Comparison between solutions of the harmonic/inverted oscillator} \label {AppB}

Let $\V_{\pm} (t) $ be the solution to the differential equation 
\bee
\left \{
\begin {array}{l}
\frac {d^2 \V_\pm}{dt^2} \pm \Omega^2 \V_\pm = C + f(t) \\
\V_\pm (0) = \V_{\pm ,0} \ \mbox { and } \ \frac {d \V_\pm (0)}{dt}  = \dot \V_{\pm ,0}
\end {array}
\right. \, , \label {Eq30}
\eee
where $C$ is a constant factor and $f(t) \le 0$ for any $t$; and let $\z_\pm (t) $ be the solution to the differential equation 
\bee
\left \{
\begin {array}{l}
\frac {d^2 \z_\pm}{dt^2} \pm \Omega^2 \z_\pm = C  \\
\z_\pm (0) = \V_{\pm ,0} \ \mbox { and } \ \frac {d \z_\pm (0)}{dt} = \dot \V_{\pm ,0}
\end {array}
\right. \, . \label {Eq31}
\eee
Then, the difference $\Z_\pm (t) = \V_{\pm} (t) - \z_\pm (t)$ solves the differential equation 
\be
\left \{
\begin {array}{l}
\frac {d^2 \Z_\pm}{dt^2} \pm \Omega^2 \Z_\pm = f(t)  \\
\Z_\pm (0) = 0 \ \mbox { and } \ \frac {d \Z_\pm (0)}{dt} = 0
\end {array}
\right. \, .
\ee
Hence, we have that
\be
\Z_+ (t) = \frac {1}{\Omega} \int_0^t \sin \left [ \Omega (t-s) \right ] f(s) ds \le 0 \ \mbox { if } \ \Omega |t| \le \pi \, 
\ee
and
\be
\Z_- (t) = \frac {1}{\Omega} \int_0^t \sinh \left [ \Omega (t-s) \right ] f(s) ds \le 0 \, , \ \forall  t \in \R \, . 
\ee
In conclusion,

\begin {lemma} \label {Lemma3}
Let $\V_{\pm}$ be the solution to (\ref {Eq30}), and let
\be
\z_+ (t) = \frac {\dot \V_{+,0}}{\Omega} \sin (\Omega t) + \V_{+,0} \cos (\Omega t) + \frac {1}{\Omega^2} C_\z \left [ 1 - \cos (\Omega t) \right ] 
\ee
and
\be
\z_- (t) = \frac {\dot \V_{-,0}}{\Omega} \sinh (\Omega t) + \V_{-,0} \cosh (\Omega t) -\frac {1}{\Omega^2} C_\z \left [ 1 - \cosh (\Omega t) \right ] 
\ee
be the solution to (\ref {Eq31}). \ Then
\be
\V_+ (t) \le \z_+ (t) \, , \ \forall t \in \left [ - \frac {\pi}{\Omega} ,+ \frac {\pi}{\Omega}\right ]
\ee
and
\be
\V_- (t) \le \z_- (t) \, , \ \forall t \in \R \, . 
\ee
\end {lemma}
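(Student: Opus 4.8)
The plan is to reduce the comparison to a sign analysis of the Duhamel (variation-of-parameters) representation of the difference $\Z_\pm(t) := \V_\pm(t) - \z_\pm(t)$, exactly along the lines already sketched in the text preceding the statement.

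First I would check that the two closed-form functions $\z_+$ and $\z_-$ given in the statement are indeed the solutions of the linear problems (\ref{Eq31}). This is a one-line computation: differentiate twice, use $\sin'' = -\Omega^2\sin$, $\cos'' = -\Omega^2\cos$ in the $+$ case and $\sinh'' = \Omega^2\sinh$, $\cosh'' = \Omega^2\cosh$ in the $-$ case, observe that the forcing term reproduces the constant $C_\z$ (which is the same constant $C$ as in (\ref{Eq31})), and match the initial data at $t=0$. Uniqueness for linear second-order ODEs then identifies $\z_\pm$ with the solution of (\ref{Eq31}).

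Next, since (\ref{Eq30}) and (\ref{Eq31}) share the same constant $C$ and the same initial data, subtracting shows that $\Z_\pm$ solves the homogeneous-data problem
\[
\frac{d^2\Z_\pm}{dt^2}\pm\Omega^2\Z_\pm = f(t),\qquad \Z_\pm(0)=0,\qquad \frac{d\Z_\pm(0)}{dt}=0 .
\]
The variation-of-parameters formula for these constant-coefficient operators, whose relevant fundamental solutions are $\tfrac1\Omega\sin(\Omega t)$ and $\tfrac1\Omega\sinh(\Omega t)$, gives
\[
\Z_+(t)=\frac1\Omega\int_0^t\sin[\Omega(t-s)]\,f(s)\,ds,\qquad
\Z_-(t)=\frac1\Omega\int_0^t\sinh[\Omega(t-s)]\,f(s)\,ds .
\]
Then I would carry out the sign analysis. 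For $t\in[0,\pi/\Omega]$ and $s\in[0,t]$ one has $\Omega(t-s)\in[0,\pi]$, so $\sin[\Omega(t-s)]\ge0$; with $f(s)\le0$ the integrand is $\le0$ and $\Z_+(t)\le0$. For $t\in[-\pi/\Omega,0]$ one writes $\int_0^t=-\int_t^0$ and notes that for $s\in[t,0]$ one has $\Omega(t-s)\in[-\pi,0]$, so $\sin[\Omega(t-s)]\le0$, the product with $f\le0$ is $\ge0$, and the overall minus sign again yields $\Z_+(t)\le0$; hence $\V_+\le\z_+$ on $[-\pi/\Omega,\pi/\Omega]$. For the hyperbolic case, $\sinh[\Omega(t-s)]$ has the same sign as $t-s$, which for $s$ between $0$ and $t$ has the same sign as $t$; thus $\sinh[\Omega(t-s)]f(s)$ has sign opposite to $t$, and integrating from $0$ to $t$ (an operation that reverses sign precisely when $t<0$) gives $\Z_-(t)\le0$ for every $t\in\R$, i.e. $\V_-\le\z_-$ on $\R$.

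The only delicate point — hence what I would call the main obstacle — is the bookkeeping of signs for $t<0$, where both the orientation of the integral and the sign of $\sin$ (resp. $\sinh$) reverse, and one must verify that these two reversals combine so as to preserve, not cancel, the inequality. A secondary matter worth a remark is the regularity needed to legitimize the Duhamel formula and to ensure $f(t)\le0$: in the applications $f(t)=2\nu\frac{\mu-2}{\mu+1}\|\psi_t\|_{L^{2\mu+2}}^{2\mu+2}$ is continuous on the maximal existence interval and is $\le0$ exactly when $\nu<0$ and $\mu\ge2$, which is precisely the regime in which Lemma \ref{Lemma3} is used in Sections \ref{Sez5_1}--\ref{Sez5_2}.
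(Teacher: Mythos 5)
Your proposal is correct and follows essentially the same route as the paper: both form the difference $\Z_\pm=\V_\pm-\z_\pm$, observe that it solves the forced equation with zero initial data, write it via the Duhamel formula with kernel $\tfrac1\Omega\sin[\Omega(t-s)]$ (resp. $\tfrac1\Omega\sinh[\Omega(t-s)]$), and conclude by the sign of the kernel against $f\le0$ on the stated time intervals. Your treatment of the $t<0$ case is in fact more explicit than the paper's, which states the sign conclusion without spelling out the orientation bookkeeping.
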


\section {A formal touch - the virial identity} \label {AppC}
Here we formally derive the virial identity for any real-valued potential $V(x)$. 

Hereafter, we denote $\psi_t $ by $\psi$ and $\psi' = \frac {\partial \psi}{\partial x}$, $\psi'' = \frac {\partial^2 \psi}{\partial x^2}$, 
$\dot \psi = \frac {\partial \psi}{\partial t}$, $\dot \I = \frac {d\I }{dt}$, $\ddot \I = \frac {d^2\I }{dt^2}$, and so on. 

Let (\ref {Eq3}) be the energy integral of motion (here we make no assumptions about the values of the mass $m$ and of the Planck constant $\hbar$): 
\be
\En (\psi ) := 
\frac {\hbar^2}{2m} \left \langle \psi' ,  \psi' \right \rangle + \langle \psi , V \psi \rangle +  
\frac {\nu}{\mu +1} \| \psi \|_{L^{2\mu +2}}^{2\mu +2}  \, ,  
\ee
Let
\be
\I(t)= \langle \hat x^2 \rangle^t = \langle \psi_t, x^2 \psi_t \rangle_{L^2}
\ee
be the momentum of inertia. \ It satisfies to the following \emph {virial identity}: 
\bee
\frac {d^2\I}{dt^2} =\frac {4}{m} \En - \frac {2}{m} \left [ \langle  \psi , x V' \psi \rangle +2 \langle \psi , V \psi \rangle \right ] + 
\frac {2\nu (\mu -2)}{m(\mu +1)} \| \psi \|_{L^{2\mu +2}}^{2\mu +2} . \label {Eq32}
\eee

In order to compute the derivatives of $\I (t)$ from (\ref {Eq7}) it follows that
\be
\dot \I  =  \frac {i}{\hbar}   \langle \psi , [H,\hat x^2] \psi \rangle
\ee
since $[|\psi |^{2\mu},\hat x^2 ]=0$. \ From this fact and since 
\be
[H,\hat x^2]\psi=  -\frac {\hbar^2}{2m} \left ( 2\psi + 4 x \psi' \right )
\ee
then 
\bee
\dot \I = - i \frac {\hbar}{m} \| \psi \|^2 - 2 i \frac {  \hbar}{m} \langle  x \psi , \psi ' \rangle \, . \label {Eq33}
\eee
From equation (\ref {Eq33}) and since the norm $\| \psi \|$ is a constant function with respect to the time then
\be
\ddot \I 
= - 2 i \frac {  \hbar}{m} \langle x \dot \psi , \psi ' \rangle - 2 i \frac {  \hbar}{m} \langle x \psi , \dot \psi ' \rangle =   2 i \frac {  \hbar}{m} \langle  \psi , \dot \psi  \rangle + 4 \frac {  \hbar}{m} \Im \left [ \langle  x\dot \psi ,  \psi'  \rangle \right ] 
\ee
where
\be
\langle  \psi , \dot \psi  \rangle  
= \frac {i}{\hbar} \langle \psi, H\psi +\nu |\psi |^{2\mu} \psi \rangle = -\frac {i}{\hbar} {\mathcal E} - \frac {i}{\hbar} \frac {\nu \mu}{\mu+1} \| \psi \|_{L^{2\mu +2}}^{2\mu +2}
\ee
because $\dot \psi = -\frac {i}{\hbar}H\psi -i\frac {\nu}{\hbar}|\psi |^{2\mu } \psi$, and 
\be
\langle  x \dot \psi , \psi ' \rangle 
=  -\frac {i}{\hbar} \langle  H\psi +\nu |\psi |^{2\mu} \psi , x \psi ' \rangle = 
\frac {i}{\hbar} B + \frac {i\nu }{\hbar}  A \, , 
\ee
where
\be
B= \langle  H\psi ,  x \psi ' \rangle \ \mbox { and } \ A= \langle |\psi |^{2\mu} \psi ,  x \psi ' \rangle  \, . 
\ee
By integrating by parts then 
\be
A &=& \int_{\R} x \bar \psi^{\mu +1}  \psi^\mu  \psi ' dx \\ 
&=& - \int_{\R} \psi^{\mu +1} \bar \psi^{\mu +1} dx
- (\mu+1) \int_{\R} x \bar \psi^{\mu}  \psi^{\mu+1} \bar \psi ' dx - \mu \int_{\R} x \bar \psi^{\mu +1}  \psi^\mu  \psi ' dx \\
&=& -  \| \psi \|_{L^{2\mu +2}}^{2\mu +2} - (\mu +1) \bar A - \mu A
\ee
from which it follows that 
\be
(A+\bar A) = - \frac {1}{\mu +1}  \| \psi \|_{L^{2\mu +2}}^{2\mu +2} \, .
\ee
Now, let
\be
B =  B_1 + B_2 \ \mbox { where } \ 
B_1 = - \frac {\hbar^2}{2m} \langle \psi'' ,  x \psi' \rangle \ \mbox { and } \ 
B_2 = \langle V \psi ,  x \psi' \rangle \, . 
\ee
A straightforward calculation yields to  
\be
B_2 =   - \langle V \psi' , x \psi \rangle  - \langle(x V)' \psi ,  \psi \rangle = - \bar B_2 - \langle(x V)' \psi ,  \psi \rangle \, , 
\ee
hence
\be
(B_2 + \bar B_2 ) =  - \langle(x V)' \psi ,  \psi \rangle \, . 
\ee
Similarly
\be
B_1 
&=& - \frac {\hbar^2}{2m} \langle \psi'' , x \psi' \rangle = \frac {\hbar^2}{2m} \langle \psi' , \hat x \psi'' \rangle + \frac {\hbar^2}{2m} \langle \psi' ,  \psi' \rangle \\
&=& -\bar B_1 + {\mathcal E} - \langle \psi , V \psi \rangle - \frac {\nu}{\mu+1} \| \psi \|_{L^{2\mu+2}}^{2\mu +2}
\ee
from which follows that 
\be
(B_1 + \bar B_1) =  {\mathcal E} - \langle \psi , V \psi \rangle- \frac {\nu}{\mu+1} \| \psi \|_{L^{2\mu+2}}^{2\mu +2}\, . 
\ee
In conclusion:
\be
\ddot \I 
&=& 2 i \frac {  \hbar}{m} \left [ -\frac {i}{\hbar} {\mathcal E} - \frac {i}{\hbar} \frac {\nu \mu}{\mu+1} \| \psi \|_{L^{2\mu +2}}^{2\mu +2} \right ] + 4 \frac {  \hbar}{m} \Im \left [ \frac {i}{\hbar} A + \frac {i\nu }{\hbar} B \right ] \\
&=& \frac {2}{m} \En + \frac {2\nu \mu}{m(\mu+1)} \| \psi \|_{L^{2\mu +2}}^{2\mu +2} + \frac {4}{m} \Re \left [ B + \nu A \right ] \\ 
&=& \frac {4}{m} \En - \frac {2}{m} \left [ \langle  \psi , x V' \psi \rangle +2 \langle \psi , V \psi \rangle \right ] + \frac {2\nu (\mu -2)}{m(\mu +1)} \| \psi \|_{L^{2\mu +2}}^{2\mu +2}
\ee
Thus (\ref {Eq32}) follows.

\begin {remark} \label {Nota10}
We remark that the virial identity (\ref {Eq32}) in the particular cases $V(x)\equiv 0$, $V(x) = \alpha x $ and $V(x) = \alpha x^2$, for $\alpha \in \R$, respectively becomes
\bee
\frac {d^2\I}{dt^2} = \frac {4}{m} \En  + \frac {2\nu (\mu -2)}{m(\mu +1)} \| \psi \|_{L^{2\mu +2}}^{2\mu +2} \, , \  \mbox { if } V(x) \equiv 0\, ,  \label {Eq34}
\eee
\bee
\frac {d^2\I}{dt^2} = \frac {4}{m} \En - \frac {6 \alpha}{m} \langle \hat x \rangle^t + \frac {2\nu (\mu -2)}{m(\mu +1)} \| \psi \|_{L^{2\mu +2}}^{2\mu +2} \, , \  \mbox { if } V(x) =\alpha x \, , \label {Eq35}
\eee
and 
\bee
\frac {d^2\I}{dt^2} &=& \frac {4}{m} \En - \frac {8\alpha}{m} \langle \hat x^2 \rangle^t + \frac {2\nu (\mu -2)}{m(\mu +1)} \| \psi \|_{L^{2\mu +2}}^{2\mu +2} \nonumber \\
&=& \frac {4}{m} \En - \frac {8\alpha}{m} \I + \frac {2\nu (\mu -2)}{m(\mu +1)} \| \psi \|_{L^{2\mu +2}}^{2\mu +2} \, , \  \mbox { if } V(x) =\alpha x^2 \, . \label {Eq36}
\eee
\end {remark}


\begin{thebibliography}{99}

\bibitem {ANS} J.Arbunich, I.Nenciu, and C.Sparber, {\it Stability and instability properties of rotating Bose–Einstein condensates}, Letters in Mathematical 
Physics {\bf 109} 1415-1432 (2019).

\bibitem {B} T.Bodurov, {\it Generalized Ehrenfest Theorem for Nonlinear Schr\"odinger Equations}, International Journal of Theoretical Physics {\bf 37} 1299-1306 (1998).

\bibitem {Car1} R.Carles, {\it Remarks on Nonlinear Schr\"odinger Equations with Harmonic Potential}, Annales Henri Poincar\'e {\bf 3} 757-772 (2002).

\bibitem {Car2} R.Carles, {\it Nonlinear Schr\"odinger Equations with Repulsive Harmonic Potential and Applications}, SIAM Journal on Mathematical Analysis {\bf 35} 
823-843 (2004).

\bibitem {Car4} R.Carles, and N.Yoshihisa, {\it Nonlinear Schr\"odinger equations with Stark potential}, Hokkaido Mathematical Journal {\bf 33} 719-729 (2004).

\bibitem {Car3} R.Carles, {\it Global existence results for nonlinear Schr\"odinger equations with quadratic potentials}, Discrete and Continuous Dynamical Systems 
{\bf 13} 385-398 (2005).

\bibitem {CG} X.Cheng, and Y.Gao, {\it Blow-up for the focusing energy critical nonlinear Schr\"odinger equation with confining harmonic potential}, 
Colloquium Mathematicum {\bf 134} 143-149 (2014). 

\bibitem {DWZ} D.Du, Y.Wu, and K. Zhang, {\it On blow-up criterion for the nonlinear Schr\"odinger equation}, Discrete and Continuous Dynamical Systems {\bf 36} 
3639-3650 (2016).

\bibitem {G} R.T.Glassey, {\it On the blowing-up of solutions to the Cauchy problem for the nonlinear Schr\"odinger equation}, Journal of Mathematical Physics 
{\bf 18} 1794-1797  (1977).

\bibitem {Jao} C.Jao, {\it The energy-critical quantum harmonic oscillator}, Communications in Partial Differential Equations {\bf 41} 79-133 (2016).

\bibitem {K} G.K\"albermann, {\it Ehrenfest theorem, Galilean invariance and nonlinear Schr\"odinger equations}, Journal of Physics A: Mathematical and General 
{\bf 37} 2999-3002 (2004).

\bibitem {Ka} O.Kavian, {\it A Remark on the Blowing-Up of Solutions to the Cauchy Problem for Nonlinear Schr\"odinger Equations}, Transactions of the American 
Mathematical Society {\bf 299} 193-203 (1987).

\bibitem {Li} X.Li, and S.Zhu, {\it Blow-up rate for critical nonlinear Schr\"odinger equation with Stark potential}, Applicable Analysis {\bf 87} 303-310 (2008).

\bibitem {M} F.Merle, {\it Blow-up Results of Viriel Type for Zakharov Equations}, Communications in Mathematical Physics  {\bf 175} 433-455 (1996).

\bibitem {OT} T.Ogawa, and Y.Tsutsumi, {\it Blow-up of $H^1$ solution for the nonlinear Schr\"odinger equation}, Journal of Differential Equations {\bf 92} 317-330 (1991).

\bibitem {R} Pierre Rapha\"el, {\it On the blow up phenomenon for the $L^2$ critical nonlinear Schr\"odinger Equation}, Lecture on Nonlinear Dispersive Equations I 
(ed. T.Ozawa, and Y.Tsutsumi), Hokkaido University technical report series in mathematics {\bf 85} (2004).

\bibitem {SZ} Z.Shihui, and Z.Jian, {\it On the concentration properties for the nonlinear Schr\"odinger equation with a Stark potential}, 
Acta Mathematica Scientia {\bf 31} 1923-1938 (2011).

\bibitem {SZ2} J.Shu, and J.Zhang, {\it Nonlinear Schr\"odinger equation with harmonic potential}, Journal of Mathematical Physics {\bf 47} 063503 (2006).

\bibitem {SS} C.Sulem, and P.-L. Sulem, {\it The nonlinear Schr\"odinger equation. \ Self-focusing and wave collapse.} Springer-Verlag (1999).

\bibitem {XL} R.Xu, and Y.Liu, {\it Remarks on nonlinear Schr\"odinger equation with harmonic potential}, Journal of Mathematical Physics {\bf 49} 043512:1-5 (2008).

\bibitem {YLZ} Z.Yue, X.Li, and J.Zhang, {\it A new blow-up criterion for Gross-Pitaevskii equation}, Applied Mathematics Letters {\bf 62} 16-22 (2016).

\bibitem {ZS} V.E.Zakharov, and A.B.Shabat,  {\it Exact theory of two-dimensional self-focusing and one-dimensional self-modulation of waves in non-linear media}, Soviet Physics-JETP {\bf 34} 62-69 
(1972).

\bibitem {ZA} M.Zhang, and S.Ahmed, {\it Sharp conditions of global existence for nonlinear Schr\"odinger equation with a harmonic potential}, Advances in Nonlinear Analysis {\bf 9} 882-894 (2020).

\end{thebibliography}
\end{document}